\newtheorem{theorem}{Theorem}[section]
\newtheorem{lemma}[theorem]{Lemma}
\newtheorem{claim}[theorem]{Claim}
\newtheorem{corollary}[theorem]{Corollary}
\theoremstyle{remark}
\newtheorem{definition}[theorem]{Definition}
\newtheorem*{remark}{Remark}
\newtheorem*{remarks}{Remarks}
\let\eps\varepsilon
\let\phi\varphi
\DeclareMathOperator{\polylog}{polylog}
\DeclarePairedDelimiter\abs{\lvert}{\rvert}
\DeclareMathOperator{\poly}{poly}
\def\A{\EuScript{A}}
\def\F{\EuScript{F}}
\let\Hsave\H
\def\H{\EuScript{H}}
\def\L{\EuScript{L}}
\def\V{\EuScript{V}}
\def\VD{\A^{\shortparallel}}
\def\reals{{\mathbb R}}
\DeclareMathOperator{\EE}{\mathbb{E}}
\newdimen\instindent
\def\institute#1{\gdef\@institute{#1}}
\begin{document}
\begin{titlepage}
  \title{Constructive Polynomial Partitioning for Algebraic Curves in~$\reals^3$ with Applications\thanks{%
      Work on this paper by Boris Aronov has been supported
      by NSF Grants CCF-11-17336, CCF-12-18791, and CCF-15-40656,
      and by BSF grant 2014/170.
      Work on this paper by Esther Ezra has been supported by NSF CAREER under grant CCF:AF 1553354
      and by Grant 824/17 from the Israel Science Foundation.
      Work on this paper by Joshua Zahl was supported by a NSERC Discovery grant.
      A preliminary version of this work was presented in \emph{Proc. 30th Annual ACM-SIAM Sympos. Discrete Algorithms}, 2019~\cite{AEZ-19}.
    }
  }
  
\author{
  Boris Aronov%
  \footnote{Department of Computer Science and Engineering,
    Tandon School of Engineering, New York University, Brooklyn, NY~11201, USA;
    \textsl{boris.aronov@nyu.edu}.
  }
  \and
  Esther Ezra%
  \footnote{Department of Computer Science, Bar-Ilan University, Ramat Gan, Israel;
    \textsl{ezraest@cs.biu.ac.il}.
  }
  \and
  Joshua Zahl\footnote{Department of Mathematics, University of British Columbia, Vancouver, BC~V6T 1Z2, Canada;
    \textsl{jzahl@math.ubc.ca}.}
}



\maketitle

\begin{abstract}
  In 2015, Guth proved that for any set of $k$-dimensional bounded complexity varieties in $\reals^d$ and for any positive integer $D$, there exists a polynomial of degree at most $D$ whose zero set divides $\reals^d$ into open connected sets, so that only a small fraction of the given varieties intersect each of these sets. Guth's result generalized an earlier result of Guth and Katz for points.

  Guth's proof relies on a variant of the Borsuk-Ulam theorem, and for $k>0$, it is unknown how to obtain an explicit
  representation of such a partitioning polynomial and how to construct it efficiently.
  In particular, it is unknown how to effectively construct such a polynomial for bounded-degree algebraic curves
  (or even lines) in ${\reals}^3$.
  
  We present an efficient algorithmic construction for this setting.
  Given a set of $n$ input algebraic curves and a positive integer $D$, we efficiently construct a decomposition of space into
  $O(D^3\log^3{D})$ open ``cells,'' each of which meets $O(n/D^2)$ curves from the input. The construction time is $O(n^2)$.
  For the case of lines in $3$-space we present an improved implementation, whose running time is $O(n^{4/3} \polylog{n})$.
  The constant of proportionality in both time bounds depends on $D$ and the maximum degree of the polynomials
  defining the input curves.

  As an application, we revisit the problem of eliminating depth cycles among non-vertical lines in $3$-space, recently studied by Aronov and Sharir (2018), and show an algorithm that cuts $n$ such lines into $O(n^{3/2+\eps})$ pieces that are depth-cycle free, for any $\eps > 0$.
  The algorithm runs in $O(n^{3/2+\eps})$ time, which is a considerable improvement over the previously known algorithms.
  
\end{abstract}

\thispagestyle{empty}
\end{titlepage}

\section{Introduction}
\label{sec:intro}

\paragraph*{Partitioning Polynomials.}
In \cite{Guth-15}, Guth developed an efficient space decomposition adapted to a set of varieties in Euclidean space. Specifically, he proved the following:

\begin{theorem}[Polynomial Partitioning for Varieties \cite{Guth-15}]
  \label{thm:guth}
  Let $\Gamma$ be a set of $k$-dimensional varieties in ${\reals}^d$, each defined by at most $m$
  polynomials of degree at most $b$. For each $D \ge 1$, there is a~$d$-variate non-zero ``partitioning polynomial'' $f$ of degree
  at most~$D$, so that ${\reals}^d \setminus Z(f)$ is a union of $O(D^d)$ connected components, each of which intersects at most
  $C \frac{\abs{\Gamma}}{D^{d-k}}$ varieties from $\Gamma$. Here $C > 0$ is a constant that depends on~$b$, $m$, and~$d$.
\end{theorem}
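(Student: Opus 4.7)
My approach would combine two ingredients: the polynomial partitioning theorem for points of Guth--Katz, and a Milnor--Thom/Warren-type estimate stating that any $k$-dimensional variety of bounded complexity meets at most $CD^k$ of the open cells of $\reals^d \setminus Z(f)$ for any polynomial $f$ of degree at most $D$. The first gives a mechanism to balance ``things'' across a polynomial partition, and the second bounds how badly a single variety can spread across cells, which together should recover the bound $O(|\Gamma|/D^{d-k})$ varieties per cell.

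First I would replace each variety $V_i \in \Gamma$ by a surrogate measure $\mu_i$ supported on $V_i$ --- for instance, the normalized $k$-dimensional Hausdorff measure on $V_i$ restricted to a common large ball, or a discrete probability measure placed on a sufficiently dense net of sample points along $V_i$. Then I would invoke an iterated polynomial ham-sandwich argument (a consequence of Borsuk--Ulam, exactly as in the point case) applied to the measures $\mu_1,\ldots,\mu_{|\Gamma|}$, producing a polynomial $f$ of degree $O(D)$ whose zero set partitions $\reals^d$ into $O(D^d)$ open cells and such that each cell carries only an $O(D^{-d})$ fraction of each $\mu_i$. Combining this per-variety mass bound with the Milnor--Thom estimate via a double counting then gives that the total mass $\sum_i \mu_i$ deposited in any single cell is $O(|\Gamma|/D^d)$, while a variety that touches a cell must contribute at least $\Omega(D^{-k})$ of its own mass there (since its total mass $1$ is spread over at most $CD^k$ cells), leaving at most $O(|\Gamma|/D^{d-k})$ varieties per cell.

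The main obstacle is the last implication. A variety $V_i$ can geometrically pass through a cell while placing essentially none of its Hausdorff mass inside that cell --- the mass may be concentrated on a different portion of $V_i$ far from where $V_i$ crosses $Z(f)$. Plain measure-theoretic bisection does not directly rule this out, which is precisely why Guth's actual proof works with a more subtle Borsuk--Ulam argument on the discrete ``cell-index'' structure rather than on continuous measures. In a plan-level proof one can patch this by choosing the $\mu_i$ to be nearly uniform along $V_i$ on a scale finer than the typical cell size (for example, by first covering each $V_i$ with $\mathrm{poly}(D)$ probe points and bisecting the resulting discrete measures), and then absorbing the small ``lost'' set of varieties with negligible charge into the constant $C$. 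Either route reduces Theorem~\ref{thm:guth} to the polynomial partitioning theorem for points together with a bounded-complexity Milnor--Thom estimate; the topological heart of the argument remains the Borsuk--Ulam ham sandwich in the space of degree-$\le D$ polynomials.
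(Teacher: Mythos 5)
First, a point of reference: the paper does not prove Theorem~\ref{thm:guth} at all --- it is quoted from Guth~\cite{Guth-15}, and the premise of the whole paper is that Guth's Borsuk--Ulam-based proof is non-constructive. So your proposal must be judged against Guth's actual argument, and it has a genuine gap beyond the one you flag yourself; in fact it has two.

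The first gap is a degree count at the ham-sandwich step. To guarantee that \emph{each} cell carries an $O(D^{-d})$ fraction of \emph{each} $\mu_i$, the iterated bisection at level $j$ must simultaneously bisect the $2^{j}\abs{\Gamma}$ restricted measures $\mu_i$ (one per variety per current cell). Stone--Tukey bisects $N$ measures only with a polynomial of degree $\Theta(N^{1/d})$, so level $j$ costs degree $\Theta\bigl((2^{j}\abs{\Gamma})^{1/d}\bigr)$ and the product polynomial ends up with degree $\Theta\bigl(D\abs{\Gamma}^{1/d}\bigr)$, far exceeding $D$. With degree $D$ one can only control $O(D^d)$ scalar quantities in total, so one can equipartition the single aggregate measure $\sum_i\mu_i$, but not all $\abs{\Gamma}$ measures individually. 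This is exactly why the varieties case cannot be reduced to the Guth--Katz point/measure partitioning.

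The second gap is the one you name, and your patch does not close it. Falling back to the aggregate measure, each cell gets total mass $O(\abs{\Gamma}/D^d)$, and you then need every variety that \emph{meets} a cell to deposit $\Omega(D^{-k})$ of its mass there. The Barone--Basu/Milnor--Thom bound only says $V_i$ meets at most $CD^k$ cells, which gives $\Omega(D^{-k})$ on \emph{average} over the cells it meets; the minimum deposit can be zero, since $V_i$ may traverse a cell along a $\mu_i$-null portion or through an arbitrarily thin tentacle. Replacing $\mu_i$ by $\poly(D)$ probe points ``at a scale finer than the typical cell size'' is not meaningful: the cells of a polynomial partition are arbitrary open semialgebraic sets with no lower bound on their width, so for any finite net there is a cell that $V_i$ crosses while avoiding every probe point. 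Guth's actual proof abandons measures entirely and applies a Borsuk--Ulam argument to the integer-valued, \emph{discontinuous} counting functions $P\mapsto \#\{F\in\Gamma : F\cap\{P>0\}\neq\emptyset\}$ (and the count for $\{P<0\}$), made rigorous by a perturbation-and-limiting procedure; that discontinuity and the limit are precisely what make the resulting polynomial non-explicit and what motivate the present paper. As written, your reduction to ``ham sandwich for measures plus Milnor--Thom'' does not establish the theorem.
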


In particular, when $\Gamma$ is a set $\L$ of algebraic curves in $\reals^3$ (defined by polynomials of degree at most~$b$),
or just lines, Theorem~\ref{thm:guth} guarantees the existence of a polynomial $f$ of degree at most~$D$ partitioning~$\reals^3$
into $O(D^3)$ connected components, each of which intersect $O(\abs{\L}/D^2)$ curves of~$\L$. Here and throughout this paper, we will think of $b$ as being fixed as $\abs{\L}$ grows, and all implicit constants are allowed to depend on $b$.

Aronov, Miller, and Sharir \cite{AMS-17} used Theorem \ref{thm:guth} to prove that $n$ pairwise disjoint non-vertical triangles in~$\reals^3$ can be cut into $O(n^{3/2 + \eps})$ pieces that form a ``depth order,'' for any $\eps > 0$ (see below for the definition of depth order and a further discussion). This work extended an earlier result of Aronov and Sharir \cite{AS-18}, who proved a bound of $O(n^{3/2} \polylog{n})$, for the analogous problem for pairwise disjoint non-vertical lines in $\reals^3$.
Apart from the $\eps$ loss in the exponent of the triangle bound and $\polylog{n}$ factor in the line bound, these results are  worst-case optimal.

Theorem~\ref{thm:guth} uses a variant of the Borsuk-Ulam theorem to obtain the partitioning polynomial. However, there is no known effective method to construct such a polynomial. Therefore, despite the recent progress on eliminating depth cycles, there is no matching algorithmic bound for the result established in \cite{AMS-17}.
The best known result in this direction is the algorithm presented by De~Berg \cite{deBerg-17}, which exploits a different technique and achieves a suboptimal bound on the number of pieces, where these pieces are triangular fragments (in contrast to the procedure in~\cite{AMS-17}, which yields pieces bounded by algebraic curves). For the case of lines, the work in \cite{AS-18} describes several slow polynomial-time algorithms to compute a depth order,
among which is an approximation algorithm by Aronov, De~Berg, Gray, and Mumford \cite[Theorem 3.1]{AdBGM-08} that produces a set of cuts whose size is larger than that of the smallest possible by only a polylogarithmic factor.

If $\Gamma$ is replaced by a set of curves~$\L$ in~${\reals}^d$ in Theorem~\ref{thm:guth}, then it is easy to find a degree $D$ polynomial $f$ so that \emph{on average} each connected component of ${\reals}^3 \setminus Z(f)$ meets $O(\abs{\L}/D^2)$ curves from $\L$ (indeed, if $f$ is a degree $D$ polynomial, then by Warren \cite[Theorem~2]{Warren-68}, the number of
connected components of~${\reals}^d \setminus Z(f)$ is at most $O(D^3)$; 
every algebraic curve not contained in $Z(f)$ intersects $Z(f)$ in $O(D)$ points,
and thus intersects only $O(D)$ connected components of ${\reals}^d \setminus Z(f)$). Enforcing the property that \emph{every} connected component of ${\reals}^d \setminus Z(f)$ intersects $O(\abs{\L}/D^2)$ curves, however, is much more difficult. In fact, even achieving a more modest
bound, say, of the form~$O(\abs{\L}/D^{1+\eps})$, for some $\eps > 0$, is already challenging.
Using \emph{$\eps$-cuttings}, one can produce a space decomposition, such that each connected component meets roughly
$O\left(\frac{\abs{\L}}{D}\right)$ curves of $\L$ (this bound is larger by an order of magnitude than our target bound),
and such that the total number of curve-connected component intersections is close to $O(\abs{\L} D)$, see, e.g., \cite{ES-05, KS-05}.
However, we are not aware of an approach based on $\eps$-cuttings where the worst-case bound on the number of curves meeting a connected component is~$o\left(\frac{\abs{\L}}{D}\right)$.

Theorem~\ref{thm:guth} is an extension of the polynomial partitioning theorem by Guth and Katz \cite{GK-15},
based on the polynomial ham-sandwich theorem of Stone and Tukey \cite{ST-42}. Namely,
Guth and Katz showed that, if $\Gamma$ is a finite set of points in ${\reals}^d$ and $D \ge 1$ is an integer parameter,
then there is a non-zero polynomial $f$ of degree at most $D$ so that each connected component of ${\reals}^d \setminus Z(f)$
contains $O(\abs{\Gamma}/D^{d})$ points of $\Gamma$, with a constant of proportionality depending on~$d$. 
Adapting a definition from \cite{AMS-13}, let $r = O(D^{d})$ be an integer parameter with an appropriate constant of proportionality (i.e., it is the number of connected components in ${\reals}^d \setminus Z(f)$ as follows by Warren \cite[Theorem~2]{Warren-68}). We say in this case that $f$ is an \emph{$r$-partitioning polynomial}.
Agarwal, Matou{\v{s}}ek, and Sharir \cite{AMS-13} presented an algorithm that efficiently computes such a polynomial~$f$:\footnote{
  We note that the polynomial $f$ computed in \cite{AMS-13} forms a partition approximating the one shown in \cite{GK-15} in that the constant of proportionality in the degree bound in Theorem~\ref{thm:ams} is slightly worse than that in \cite{GK-15}.}

\begin{theorem}[Effective Polynomial Partitioning for Points \cite{AMS-13}]
  \label{thm:ams}
  Given a set $\mathcal{P}$ of $n$ points in~${\reals}^d$ and an integer parameter~$r \le n$,
  an $r$-partitioning polynomial $f$ for $P$ of degree~$O(r^{1/d})$, with an implicit constant depending on $d$,
  can be computed in randomized expected time~$O(nr + r^3)$.
\end{theorem}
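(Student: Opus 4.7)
The plan is to build $f$ as a product of bisecting polynomials, obtained by iterating an algorithmic version of the polynomial ham-sandwich theorem in a doubling fashion. This mirrors the existential Guth--Katz partition theorem, but each bisection step must be made constructive.

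First I would establish an effective polynomial ham-sandwich lemma: given $t$ finite point sets $S_1,\ldots,S_t\subset\reals^d$ with total size $n$, one can compute a nonzero polynomial $g$ of degree $O(t^{1/d})$ whose zero set bisects each $S_j$, meaning each of $\{g>0\}$ and $\{g<0\}$ contains at most $|S_j|/2$ points of $S_j$. Existence follows from Stone--Tukey via the Veronese embedding $v_D\colon\reals^d\to\reals^N$ with $N=\binom{d+D}{d}$ and $D=\lceil c\,t^{1/d}\rceil$ chosen so that $N\ge t$: a common bisecting polynomial in $\reals^d$ corresponds to a linear ham-sandwich hyperplane in $\reals^N$. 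To make this step constructive I would invoke a known algorithm for linear ham-sandwich on $t$ sets in $\reals^t$, whose cost is polynomial in $t$ plus linear in the total number of input points.

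Next I would build the partition in $\lceil\log_2 r\rceil$ rounds. After round $i$, the input $P$ is split into $2^i$ classes of size at most $\lceil n/2^i\rceil$; in round $i+1$ I apply the lemma to these $t=2^i$ classes to produce a bisector $g_{i+1}$ of degree $O(2^{i/d})$, and then refine each class by the sign of $g_{i+1}$ (points landing on $Z(g_{i+1})$ are assigned arbitrarily). After $\log_2 r$ rounds I set $f=\prod_i g_i$; its degree is $\sum_{i=1}^{\log_2 r} O(2^{(i-1)/d})=O(r^{1/d})$, a geometric sum dominated by its last term. A connected component of $\reals^d\setminus Z(f)$ cannot cross any $Z(g_i)$, so it lies inside a single final class and thus contains only $O(n/r)$ points. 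By Warren's theorem the number of such components is $O(r)$, so $f$ is an $r$-partitioning polynomial of the claimed degree.

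For the running time, round $i$ evaluates $g_i$ on all $n$ points to refine the classes; since $g_i$ has $O(2^{i-1})$ monomials, the refinement costs $O(n\cdot 2^{i-1})$, summing to $O(nr)$. Computing $g_i$ itself costs a polynomial in the number of classes $t=2^{i-1}$; a cubic-time algorithmic linear ham-sandwich in the lifted dimension $\Theta(t)$ contributes $O(t^3)$ per round, summing to $O(r^3)$. The two costs combine to $O(nr+r^3)$ as claimed.

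The principal obstacle is constructing the ham-sandwich bisector: the classical Stone--Tukey proof uses Borsuk--Ulam and is non-constructive. The saving trick is the Veronese linearization, after which one only needs an algorithm for linear ham-sandwich on equally many sets as the ambient dimension---randomization enters here to obtain an efficient (near-cubic) running time. A secondary technical issue is matching the lifted dimension exactly to the number of sets (by padding with a few dummy sets if $N>t$) and handling degenerate cases where many points of a class lie on $Z(g_i)$; neither affects the asymptotic degree or cell-size bounds, but both must be tracked to keep the geometric-series estimate for the total degree honest.
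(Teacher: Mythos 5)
Your overall architecture (Veronese lifting, $\lceil\log_2 r\rceil$ doubling rounds, $f=\prod_i g_i$ with a geometric degree sum, Warren's theorem for the component count, and the $O(nr)+O(r^3)$ cost accounting) is exactly the skeleton of the Agarwal--Matou\v{s}ek--Sharir algorithm that the paper cites for this theorem. However, there is a genuine gap at the one step where all the difficulty lives: you invoke ``a known algorithm for linear ham-sandwich on $t$ sets in $\reals^t$, whose cost is polynomial in $t$ plus linear in the total number of input points.'' No such algorithm is known. Exact linear ham-sandwich cuts can be computed efficiently only in fixed dimension (the known algorithms run in time roughly $n^{d-1}$ in $\reals^d$), and here the ambient dimension $N=\Theta(t)$ grows up to $\Theta(r)$; computing an exact cut in that regime is not known to be doable in time polynomial in $N$. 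Assuming this primitive assumes away precisely what makes the effective version of Stone--Tukey/Guth--Katz nontrivial.

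What \cite{AMS-13} actually does---and what the present paper recounts in Section~\ref{sec:lines_alg}---is to relax the bisection requirement in two ways so that a trivial randomized construction suffices: the polynomial need only \emph{dissect} each set (leave at most a $7/8$ fraction on each open side, not exactly half), and it need only succeed on at least half of the current classes in each round. The polynomial is obtained by taking the hyperplane through a small random sample of the lifted points (an $O(N^3)$ Gaussian elimination, which is where the $r^3$ term really comes from), and a probabilistic argument shows this dissects at least half the classes with high probability; classes that fail are simply retried in subsequent rounds. Your round invariant (``classes of size at most $\lceil n/2^i\rceil$ after round $i$'') must therefore be weakened to a $(7/8)$-decay applied to half the classes at a time, which costs only a constant factor in the number of rounds and hence in the degree, but it changes the bookkeeping. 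To repair your proof you would need to either supply the random-sampling dissection lemma with its analysis, or cite an efficient high-dimensional exact ham-sandwich algorithm---and the latter does not exist.
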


Although the authors in~\cite{AMS-13} do not state so explicitly, the proof of Theorem \ref{thm:ams} also applies to multisets of points, or, equivalently, sets of points with positive integer \emph{weights}, where the weight of a point corresponds to the number of times that it appears in the multiset, where now $n$ denotes the total weight of $P$. This multiset formulation will be useful for our analysis below.

\paragraph*{Model of Computation.}
The algorithm in~\cite{AMS-13} uses the real RAM model of computation, where the input data contains arbitrary real numbers and each arithmetic operation on them is charged unit cost.

In this paper, we additionally assume that, for each $b\geq 1$, the roots of a real univariate polynomial of degree $b$ can be computed in time that depends only on $b$. 
This model was first introduced by Agarwal and Matou{\v{s}}ek~\cite{AM-94} in the context of range searching with semi-algebraic sets (and in fact a variant of it was used as early as in 1983 by Atallah \cite{atallah}), and it has become standard for this type of problems.

\paragraph*{Our Result.}
We present an efficient algorithm that, given a set of algebraic curves in $\reals^3$, partitions~$\reals^3$ into disjoint open ``cells'' (plus a ``boundary'') so that only a small fraction of the curves intersect each cell.
Informally, we prove a theorem of the following kind:

\begin{theorem}[Informal Version]
  \label{thm:cell_decomposition}  
  Let $\L$ be a collection of $n$ irreducible algebraic curves (of constant degree) in $\reals^3$ that satisfy a mild general position requirement\footnote{See Section~\ref{sec:decomposition}.}.
  Let $D$ be a positive integer. Then there is a decomposition of $\reals^3$ into $O(D^{3} \log^3{D})$ disjoint open cells, plus a boundary, so that
  each cell intersects $O(n/D^2)$ curves of $\L$. The \emph{boundary} is the union of an algebraic variety of degree $O(D\log{D})$
  and dimension two, plus an additional semi-algebraic set (with empty interior) that has finite and well-behaved intersection with all but a small number of curves from $\L$.   Moreover, this decomposition can be computed in $O(n^2)$ randomized expected time. For the special case where $\L$ is a set of lines in ${\reals}^3$, the expected running time improves to $O(n^{4/3} \polylog{n})$.
\end{theorem}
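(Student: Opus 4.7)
The plan is to bootstrap the effective \emph{point}-partitioning theorem of Agarwal--Matou\v{s}ek--Sharir (Theorem~\ref{thm:ams}) into an effective \emph{curve}-partitioning theorem. A direct approach---sampling constant-size point sets from each curve and applying Theorem~\ref{thm:ams}---fails, because balancing sample points does not balance curves: a curve can traverse a cell without placing any sample point inside it. We therefore combine point partitioning with an iterative refinement. In the first step, for each $\ell \in \L$ we place a weighted sample of points along $\ell$, obtaining a weighted multiset $P$, and apply Theorem~\ref{thm:ams} with $r = \Theta(D^3)$ to obtain a polynomial $f_0$ of degree $O(D)$ whose zero set partitions $\reals^3$ into $O(D^3)$ cells. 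By B\'ezout, each curve $\ell \not\subset Z(f_0)$ meets $Z(f_0)$ in $O(D)$ points, hence traverses $O(D)$ cells, giving $O(nD)$ total curve--cell incidences and an \emph{average} of $O(n/D^2)$ curves per cell.

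Since the average bound may mask a few heavily loaded cells, the second step is a recursive refinement: for each \emph{heavy} cell $\Omega$---one meeting more than $c \cdot n/D^2$ curves for an appropriate constant $c$---apply the construction recursively to $\L \cap \Omega$ with a suitably scaled-down budget. A counting argument using the $O(nD)$ incidence bound shows that only a small fraction of the cells are heavy, so the recursion stabilises after $O(\log D)$ levels. Multiplying the per-level partitioning polynomials produces a composite polynomial $f$ of \emph{additive} total degree $O(D \log D)$; by Warren's bound $Z(f)$ partitions $\reals^3$ into $O((D \log D)^3) = O(D^3 \log^3 D)$ open cells. Curves lying entirely in $Z(f)$ cannot be cut by the partitioning and are placed into the auxiliary semi-algebraic component of the boundary, where the general-position hypothesis caps their number and guarantees that they intersect the other curves of $\L$ in a small, well-behaved set.

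For the running time, each recursion level is implementable in expected time $O(n^2)$ using algebraic primitives for computing intersections of curves with $Z(f_0)$, together with one invocation of Theorem~\ref{thm:ams}; summing over the $O(\log D)$ levels and absorbing the $D$-dependence into the implicit constant yields the overall $O(n^2)$ bound. For the special case of lines, the generic intersection primitives are replaced with faster tools---cuttings for lines, point-line duality, and efficient biclique covers of the associated incidence graphs---driving the bound down to $O(n^{4/3} \polylog n)$ expected time.

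The main obstacle is controlling how the partition polynomial's degree compounds across the recursion. A naive scheme in which each level multiplies the degree gives a prohibitive $D^{\Theta(\log D)}$; instead, one must arrange for each new level to contribute only $O(D)$ \emph{additive} degree, so that the total telescopes to $O(D \log D)$. Achieving this requires a careful choice of the per-level sample, a precise budgeting of the parameter $r$ in each recursive call, and a tight accounting of how many curves survive into each subproblem---together with a matching argument showing that the measure of heavy cells shrinks by a constant factor at every level.
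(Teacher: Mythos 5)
There is a genuine gap at the heart of your plan: the recursion on heavy cells has no valid progress measure. You correctly identify that partitioning sampled points on the curves only controls the \emph{average} number of curves per cell, but your proposed fix---recursing into each heavy cell with the same point-sampling construction---runs into exactly the same obstruction one level down. Inside a heavy cell $\Omega$, a fresh degree-$D$ partition of sampled points again equidistributes points, not curves, so a subcell of $\Omega$ can still be met by far more than $O(n/D^2)$ curves, and nothing in your argument forces the number of curves per heavy subcell to decrease from one level to the next. The counting argument you invoke shows that heavy cells are few, but ``few heavy cells'' is not a quantity that shrinks under your recursion, so the claim that it ``stabilises after $O(\log D)$ levels'' is unsupported. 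The paper's proof works precisely because it partitions a different object: the multiset $V(\L)$ of \emph{points of vertical visibility} (pairs of points on distinct curves with coinciding $xy$-projections, of which there are $O(n^2)$ under the general-position hypothesis). Applying Theorem~\ref{thm:ams} to $V_{k-1}$ rather than to samples on the curves gives a genuine progress measure: unacceptable cells are few, hence contain at most half of $V_{k-1}$, so the visibility points surviving in unacceptable cells halve at every iteration (Lemma~\ref{lem:firstPartitioning}). After $O(\log D)$ rounds every cell either meets $O(n/D^2)$ curves or contains only $O(n^2/D^4)$ visibility points (Corollary~\ref{cor:firstPartitioning}).

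The second missing ingredient is what to do with a cell that still meets many curves but has few visibility points. This cannot be fixed by further polynomial partitioning; the paper instead switches tools entirely (Section~\ref{sec:second_step}): it projects the curve segments in such a cell to the $xy$-plane, observes that few visibility points means few intersections among the projected arcs, and builds a $(1/r)$-cutting from a random sample of density $cD^2/n$ whose arrangement has expected complexity $O(m_\Omega+1)$ precisely because intersections are scarce; the planar pseudo-trapezoids are then lifted to vertical prisms. This second stage is also the source of the semi-algebraic (non-algebraic) part of the boundary in the theorem statement---it consists of the vertical prism walls, not, as you suggest, of curves contained in $Z(f)$ (general position does not bound the number of such curves, and they are simply absorbed into the boundary). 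Finally, the $O(n^{4/3}\polylog n)$ bound for lines comes from representing the visibility points \emph{implicitly} via Agarwal's compact biclique representation of intersecting segment pairs inside each cell; without the visibility-point formulation there is no set of $\Theta(n^2)$ points to represent implicitly, so that part of your plan does not connect to the rest.
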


A precise statement of Theorem~\ref{thm:cell_decomposition} appears in Section~\ref{sec:decomposition} (Theorem~\ref{thm:cell_decomposition_restated}).
The proof of Theorem~\ref{thm:cell_decomposition}
is based on a two-level decomposition. The first level produces a polynomial partitioning for points
using Theorem~\ref{thm:ams}, and in the second level we apply the method of ``$\eps$-cuttings'' in order to construct an efficient planar decomposition for curve segments, provided that few pairs of curves intersect. This technique also allows us to efficiently partition curve segments in~$\reals^3$,
provided few pairs of curves intersect when projected to the $xy$-plane.
These two ingredients are combined as follows. For each curve in~$\L$, we consider all points on the curve whose projection onto the $xy$-plane lies on the projection of another curve from~$\L$; we call such points ``points of vertical visibility.'' Using Theorem \ref{thm:ams}, we partition $\reals^3$ into connected open cells, so that each cell either intersects
few curves from $\L$, or it contains few points of vertical visibility. Cells of the first kind satisfy the conclusions of our theorem. Cells of the second kind are further decomposed using the $\eps$-cutting machinery mentioned above.

Theorem \ref{thm:cell_decomposition} produces a space decomposition with very similar properties to that of Theorem~\ref{thm:guth} for the case $d = 3$, $k = 1$, though our decomposition is weaker by a polylogarithmic factor. However, because our theorem employs a two-level construction, the ``boundary'' of the cells is not an algebraic variety. Instead, it is the union of an algebraic variety (representing the zero set of an appropriate polynomial obtained at the primary partition) and a semi-algebraic set resulting from the secondary partition. 

\paragraph*{An application: Eliminating depth cycles for lines in $\reals^3$.}

Let $\L$ be a set of $n$ pairwise-disjoint non-vertical lines in $\reals^3$. If $\ell, \ell' \in \L$, we say that $\ell$ passes \emph{above} $\ell'$ if a vertical line that meets both
$\ell$ and $\ell'$ intersects $\ell$ at a point that has larger $z$-coordinate than that of its intersection with $\ell'$;
this line is unique if $\ell$ and $\ell'$ have non-parallel $xy$-projections.
We denote this relation as $\ell' \prec \ell$.
This relation is not necessarily transitive, and is likely to form cycles that consist of three or more lines.

Our goal is to efficiently cut the lines in $\L$ into a finite number of
pieces that do not form any cycles under the relation $\prec$; this is also referred to as \emph{depth order}. In our setting, these resulting pieces are line segments, rays, or just lines in $3$-space. Aronov and Sharir \cite{AS-18} used Theorem~\ref{thm:guth} for the case $d=3$, $k=1$ to obtain a near optimal subquadratic bound on the number of cuts required to create a depth order for lines, provided that the lines are non-vertical and no two lines intersect.
Specifically, they showed that $O(n^{3/2}\log^{O(1)}n)$ cuts suffices. This nearly matches the best-known lower bound $\Omega(n^{3/2})$. We present an efficient implementation of their method, which follows from Theorem \ref{thm:cell_decomposition}.

\begin{theorem}
  \label{thm:depth_cycles}
  Let $\L$ be a collection of $n$ pairwise-disjoint non-vertical lines in~${\reals}^3$. Suppose that no pair of
  projected lines coincide.
  Then, for any prescribed $\eps > 0$, we can cut the lines in $\L$ into $O(n^{3/2 + \eps})$ pieces
  whose depth relation is acyclic. This cutting can be computed in expected $O(n^{3/2+\eps})$ time, where the constant of proportionality depends on $\eps$.
\end{theorem}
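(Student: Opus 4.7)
The plan is to follow the recursive polynomial-partitioning scheme introduced by Aronov and Sharir~\cite{AS-18}, and to make it algorithmic by substituting the existential Theorem~\ref{thm:guth} with our constructive cell-decomposition result, Theorem~\ref{thm:cell_decomposition}. The polylogarithmic loss inherent in that decomposition will be absorbed into the $n^{\eps}$ slack of the target bound.

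Concretely, fix a sufficiently large constant $D=D(\eps)$. At each recursive step we invoke Theorem~\ref{thm:cell_decomposition} on the lines carrying the current collection of segments, constructing in expected time $O(n^{4/3}\polylog n)$ a decomposition of $\reals^3$ into $O(D^3\log^3 D)$ open cells, each of which meets at most $O(n/D^2)$ lines. We then cut every segment at its intersections with the boundary of the decomposition: the algebraic part, of degree $O(D\log D)$, produces $O(D\log D)$ cuts per segment, while the semi-algebraic extra boundary contributes $O(1)$ cuts per well-behaved line. Lines that lie inside the zero set of the primary partitioning polynomial, or whose interaction with the secondary boundary (built from $\eps$-cuttings) is pathological, are extracted and processed by a separate, essentially two-dimensional, cycle-elimination subroutine, obtained by adapting the corresponding step of~\cite{AS-18} to run within the same time budget. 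For every open cell we recurse on the $O(n/D^2)$ segments that meet it.

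Correctness rests on the observation that, after cutting at the boundary of the decomposition, any two residual segments whose $xy$-projections cross at a point~$p$ have a well-defined depth relation determined inside the unique cell that contains the vertical segment above~$p$; consequently each remaining depth cycle is confined either to a single cell (and so is broken by the recursive call on that cell) or to the two-dimensional subproblem. Let $T(n)$ denote the total number of cuts and $R(n)$ the running time; both satisfy recurrences of the form
\[
  T(n) \le O(D^3\log^3 D)\, T(cn/D^2) + O(nD\log D) + T_{\mathrm{2D}}(n),
\]
\[
  R(n) \le O(D^3\log^3 D)\, R(cn/D^2) + O(n^{4/3}\polylog n) + R_{\mathrm{2D}}(n).
\]
Since
\[
  \log_{D^2}\bigl(D^3\log^3 D\bigr) \;=\; \tfrac{3}{2} + O\!\Bigl(\tfrac{\log\log D}{\log D}\Bigr),
\]
choosing $D$ large enough as a function of $\eps$ forces the leading exponent below $3/2+\eps$, and a standard Master-theorem style analysis yields $T(n),R(n)=O(n^{3/2+\eps})$, provided the two-dimensional subroutine stays within the same budget.

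The principal obstacle will be the casework forced by the non-algebraic portion of the boundary produced by Theorem~\ref{thm:cell_decomposition}. We must verify that only a controlled number of lines can interact badly with the secondary, $\eps$-cutting based, part of the decomposition, so that such lines can be diverted into the two-dimensional subroutine without inflating the recurrence. Closely related is the need to implement the two-dimensional cycle-elimination step of~\cite{AS-18} algorithmically within an $O(n^{3/2+\eps})$ budget; since a depth cycle trapped on the degree-$O(D\log D)$ zero set can be recast as a planar problem about the projections of the resulting algebraic arcs, we expect this to go through by a further, simpler recursion that again invokes Theorem~\ref{thm:cell_decomposition} (or its planar analogue) and whose cost is dominated by the $O(n^{4/3}\polylog n)$ decomposition time per level.
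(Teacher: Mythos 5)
Your overall architecture coincides with the paper's: fix a large constant $D=D(\eps)$, build the partition of Theorem~\ref{thm:cell_decomposition} via the $O_D(n^{4/3}\polylog n)$ line implementation, cut at the boundary, recurse in each cell, and solve the recurrences $C(n)=O(D^3\log^3D)\,C(cn/D^2)+O_D(n)$ and $T(n)=O(D^3\log^3D)\,T(cn/D^2)+O_D(n^{4/3}\polylog n)$ by taking $D$ large. That part, including the exponent computation, is exactly what the paper does.

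The genuine gap is in your correctness argument: you only make \emph{first-type} cuts (at intersections with the cell boundary) and then assert that any surviving depth relation between two pieces is ``determined inside the unique cell that contains the vertical segment above $p$.'' That vertical segment need not lie in any single cell --- it can cross $Z(P)$ --- so two pieces lying in \emph{different} cells can still satisfy $\ell'\prec\ell$, and cycles built from such cross-cell relations are never seen by any recursive call. This is precisely why Aronov and Sharir introduce \emph{second-type} cuts, which you omit: each line is additionally cut at the $O(\deg(P)^2)$ points $(x_0,y_0,z_0)$ for which $P$ and $\partial P/\partial z$ vanish simultaneously at some $(x_0,y_0,z_1)$ with $z_1<z_0$ (computable per line in $O_D(1)$ time by solving $\{P=0,\ \partial P/\partial z=0\}$ in the vertical halfplane bounded by the line). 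These cuts make the number of sheets of $Z(P)$ lying below a piece constant along that piece, and that invariant is what forces every residual cycle into a single cell; they contribute $O((D\log D)^2 n)$ cuts, which is absorbed into the $O_D(n)$ overhead since $D$ is constant. Relatedly, the paper does not ``divert'' lines interacting with the secondary ($\eps$-cutting) boundary to a 2D subroutine; it simply cuts each line of $\L_\Omega$ where its projection enters and exits each pseudo-trapezoid, adding $O(D^3\log^3 D)$ cuts per line, and then argues the AS-18 correctness proof carries through. Your proposal would need the second-type cuts restored (and the sheet-counting argument) before the recursion can be claimed to eliminate all cycles.
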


The algorithm in Theorem~\ref{thm:depth_cycles} is considerably faster than the algorithms presented in \cite{AS-18} and by De~Berg \cite{deBerg-17}, though the resulting collection of segments is slightly larger (our algorithm produces $(n^{3/2+\eps})$ segments, versus $O(n^{3/2}\log^{O(1)}n)$ segments in \cite{AS-18} and \cite{deBerg-17}), and note that the algorithms in\cite{AS-18, deBerg-17} do not require any general position assumptions on the collection of lines (i.e., line projections may coincide).

The main motivation for eliminating depth cycles comes from hidden surface removal---a technique for rendering a scene in
computer graphics \cite{dBCKO-08}. We refer the reader to the earlier work in~\cite{AKS-05, CEGPSSS-92}, as well as the more recent studies in~\cite{AMS-17, AS-18, deBerg-17} for a comprehensive overview, which also includes the more intricate problem of eliminating depth cycles among pairwise disjoint triangles in $3$-space. In \cite{AMS-17}, the first author, Miller, and Sharir used Theorem~\ref{thm:guth} for the case $d=3$, $k=1$ in order to obtain a near optimal subquadratic bound on the number of pieces required to eliminate depth cycles for triangles in 3-space. In contrast to the case of lines, however, Theorem \ref{thm:cell_decomposition} cannot be used directly to efficiently implement their technique for triangles. This is due to the fact that the structure resulting from Theorem \ref{thm:cell_decomposition} is different than the one resulting from Theorem~\ref{thm:guth}. After the publication of the proceedings version of this paper \cite{AEZ-19}, Agarwal and the authors \cite{AAEZ} have shown, using a different set of tools, that the polynomial partitioning stated in Theorem~\ref{thm:guth} can be computed in an efficient manner. Integrating this result with the mechanism in~\cite{AMS-17} eventually yields an efficient algorithm for the setting of triangles, and, as a result, for the setting of lines (which is considered as a special case of triangles).
We do not discuss this further here.

While the follow-up work~\cite{AAEZ} provides general methods for performing partitioning,  the current work has several advantages:
(i)~It gives a relatively simple solution to the setting of algebraic curves in ${\reals}^3$, and, in particular, bypasses the complicated topological machinery of Guth~\cite{Guth-15} based on the Borsuk-Ulam theorem. So far the only work we are aware of, which bypasses the Borsuk-Ulam theorem, is the one in~\cite{AMS-13} addressing the setting of points in ${\reals}^d$.
(ii)~The running time of the algorithm in~\cite{AAEZ} is linear in the number of input curves, but exponential in the degree $D$ of the polynomial partitioning. On the other hand, the running time of our algorithm is polynomial in~$D$, and therefore it is considerably more efficient when the degree $D$ is non-constant.
(iii)~Last but not least, our work contains novel techniques that give insight into the geometry specific to curve arrangements. There are still many open problems related to curve arrangements, and we hope these ideas will be helpful for future work in this direction. 

\section{Polynomial Partitioning for Algebraic Curves in $3$-Space}
\label{sec:decomposition}
In this section we prove Theorem \ref{thm:cell_decomposition}. Let $\L$ be a collection of $n$~irreducible algebraic curves in~$\reals^3$, each defined by polynomials of degree at most $b$. We will think of $b$ as being fixed, so all implicit constants may depend on $b$. In particular, we write $X(n) = O(Y(n))$ to mean that there exists a constant $C$ depending only on $b$ so that $X(n)\leq C Y(n)$ for all positive integers $n$. We write $X(n) = O_{t}(Y(n))$ to mean that there exists a constant $C$ depending only on $b$ and $t$ so that $X(n)\leq C Y(n)$ for all positive integers $n$.

For a set $X \subset \reals^3$, we denote by $X^*$ its projection onto the $xy$-plane.
Let $\L^*=\{\gamma^* \mid  \gamma\in\mathcal{L}\};$ this is the set of $xy$-projections of the curves of $\L$. 

\paragraph{General position assumptions.}
Let $\L$ be a set of irreducible curves in $\reals^3$. We say that the curves in $\L$ are  \emph{in general position} if the $xy$-projection of each pair of distinct curves from $\L$ have finite intersection. If in addition none of the curves in $\L$ are vertical lines (i.e., lines parallel to the $z$-axis), then we say that the curves in $\L$ are \emph{in non-vertical general position}.

\begin{definition}
  Let $\gamma,\gamma'$ be two distinct irreducible curves in $\reals^3$. A pair of
  points $(p ,p') \in \gamma \times \gamma'$, are called \emph{points of ``vertical visibility'' (with respect to $\gamma$ and $\gamma'$)} if $p^*=(p')^*$.
\end{definition}

If neither $\gamma$ nor $\gamma^\prime$ is a vertical line, and the projections of $\gamma$ and $\gamma^\prime$ have a finite intersection, then $\gamma$ and $\gamma'$ have finitely many pairs of points of vertical visibility.

For the set $\L$ of irreducible curves defined above, we denote by $V(\L)$ the multiset of all points of vertical visibility admitted by the curves in $\L$. If the curves in $\L$ are in non-vertical general position, then each pair of curves from $\L$ contribute $O(1)$ points to $V(\L)$, and thus $|V(\L)|=O(n^2)$.

Our main algorithm assumes that the curves in $\L$ are in non-vertical general position. Later in the paper we show how this assumption can be weakened to merely assume that the curves are in general position.

\begin{remark}
  We do not make any further general position assumptions.
  In particular, the curves in $\L$ are allowed to intersect, and the $xy$-projections of three or more curves from $\L$ are allowed to intersect at a point. If a point $p\in V(\L)$ is contained in $m$ curves from $\L$ and its $xy$-projection $p^*$ is contained in the $xy$-projection of $k\geq m$ curves from $\L$, then $p$ will have multiplicity $m(k-1)$ in $V(\L)$, i.e., it will have weight $m(k-1)$. This observation will be exploited when we apply Theorem \ref{thm:ams}.
  Note also that a single curve $\gamma \in \L$ may intersect itself.
  Moreover, a vertical line may intersect $\gamma$ at several points, implying that the $xy$-projection of $\gamma$ intersects itself.  However, we do not view these points as points of vertical visibility. 
  In fact, this self-intersection might be a 
  semi-algebraic set of positive dimension (this will occur, for example, if $\gamma$ is a circle whose projection to the $xy$-plane is a line segment), in which case there is an overlap in the $xy$-projection of $\gamma$.
  We revisit this scenario in Section~\ref{sec:second_step}, where we describe how to incorporate that into our analysis.
\end{remark}

We will need a number of tools to analyze the collection of curves $\L$. In \cite{BB-12}, Barone and Basu used considerations from real algebraic geometry in order to obtain several analogues of B\'ezout's theorem and Harnack's curve theorem, which are useful when studying algebraic space curves. We will state two special cases of their results here. 

\begin{theorem}[B\'ezout's theorem for real algebraic space curves]\label{thm:Bezout}
For each integer $b\geq 1$, there is a constant $C_1(b)$ so that the following holds. Let $\gamma$ be an irreducible algebraic curve in $\reals^3$ that is defined by polynomials of degree at most $b$. Let $P$ be a trivariate polynomial of degree $D$. Then either $P$ vanishes on $\gamma$ (i.e., $\gamma\subset Z(P)$) or $|\gamma\cap Z(P)|\leq C_1(b)D$. 
\end{theorem}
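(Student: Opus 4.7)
}

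The plan is to reduce the statement to the classical complex version of B\'ezout's theorem by passing to the complexification of~$\gamma$. Let $I(\gamma)\subset\reals[x,y,z]$ be the real ideal of polynomials vanishing on~$\gamma$. By hypothesis, $I(\gamma)$ contains generators of degree at most~$b$ cutting out $\gamma$ as a $1$-dimensional irreducible real variety. Let $\gamma^{\mathbb C}\subset{\mathbb C}^3$ denote the smallest complex algebraic variety containing~$\gamma$; equivalently, $\gamma^{\mathbb C}=Z_{\mathbb C}(I(\gamma))$. Since $\gamma$ has real dimension~$1$, the variety $\gamma^{\mathbb C}$ has complex dimension~$1$, and one checks (using irreducibility of~$\gamma$) that $\gamma$ is Zariski-dense in some irreducible component of~$\gamma^{\mathbb C}$. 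It suffices to work with that component, which we continue to denote $\gamma^{\mathbb C}$; it is an irreducible complex algebraic curve containing~$\gamma$.

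The next step is to bound the degree of $\gamma^{\mathbb C}$ as a complex projective curve by a function of~$b$ alone. This is where I would invoke an effective B\'ezout-type degree estimate from elimination theory (for instance, Heintz's bounds on degrees of varieties defined by polynomials of bounded degree, or equivalently the degree estimates from~\cite{BB-12}): an irreducible component of the intersection of finitely many hypersurfaces in~${\mathbb C}^3$ of degree at most~$b$ has degree at most~$b^{O(1)}$, in particular at most some constant $C_1(b)$.

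Now suppose $P$ does not vanish identically on~$\gamma$. Because $\gamma$ is Zariski-dense in~$\gamma^{\mathbb C}$, the polynomial~$P$, viewed over~${\mathbb C}$, does not vanish on all of~$\gamma^{\mathbb C}$ either. Hence $\gamma^{\mathbb C}\cap Z_{\mathbb C}(P)$ is a proper Zariski-closed subset of the irreducible $1$-dimensional variety~$\gamma^{\mathbb C}$, so it is zero-dimensional (i.e., finite). Classical complex B\'ezout bounds its cardinality, even counted with multiplicity, by $\deg(\gamma^{\mathbb C})\cdot\deg(P)\le C_1(b)\,D$. Since $\gamma\cap Z(P)\subset\gamma^{\mathbb C}\cap Z_{\mathbb C}(P)$, the same bound holds over the reals, completing the proof.

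The only nontrivial ingredient is the degree bound on~$\gamma^{\mathbb C}$; everything else is bookkeeping. I expect this to be the main obstacle, though it is a standard consequence of effective elimination theory and is precisely the kind of estimate for which Barone and Basu~\cite{BB-12} provide clean statements tailored to real varieties, which is why we are happy to cite their work rather than re-derive the bound here.
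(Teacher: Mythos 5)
The paper offers no proof of Theorem~\ref{thm:Bezout}: it is quoted as a special case of the results of Barone and Basu~\cite{BB-12} on sign conditions on a variety, which bound the number of connected components of $\gamma\cap Z(P)$ directly by real-algebraic-geometry methods, without complexifying. Your route --- pass to the complex Zariski closure $\gamma^{\mathbb{C}}$, bound $\deg(\gamma^{\mathbb{C}})$ by a constant $C_1(b)$, and apply classical B\'ezout for a curve and a hypersurface --- is therefore a genuinely different derivation, and the reduction itself is sound: if $P$ does not vanish on $\gamma$ then by density it does not vanish on $\gamma^{\mathbb{C}}$, so $\gamma^{\mathbb{C}}\cap Z_{\mathbb{C}}(P)$ is finite of cardinality at most $\deg(\gamma^{\mathbb{C}})\cdot D$. (A small simplification: since $I(\gamma)$ is a real prime ideal and $\gamma$ is by construction Zariski dense in $Z_{\mathbb{C}}(I(\gamma))$, that variety is automatically irreducible, so there is no need to select a component.)

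The step that is not justified as written is the degree bound on $\gamma^{\mathbb{C}}$, and your proposed mechanism for it would fail if taken literally. You argue that $\gamma^{\mathbb{C}}$ is an irreducible component of the intersection of the complex hypersurfaces $Z_{\mathbb{C}}(f_i)$ of degree at most $b$, so that refined B\'ezout/Heintz-type estimates apply. But $\gamma^{\mathbb{C}}$ is the Zariski closure of the \emph{real} points, and this need not be a component of the complex zero set: for instance, $Z(x^2+y^2)\subset\reals^3$ is the $z$-axis, an irreducible real curve defined by a degree-$2$ polynomial, whose complex closure is the line $\{x=y=0\}$, whereas $Z_{\mathbb{C}}(x^2+y^2)$ is a union of two planes of which that line is the intersection, not a component. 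What is actually needed is a bound on the degree of the complex Zariski closure of a real algebraic set cut out by degree-$b$ polynomials; this is true, but it is essentially the entire content of the real B\'ezout-type results of \cite{BB-12} rather than ``bookkeeping.'' Since you ultimately defer precisely this step to \cite{BB-12}, your proposal ends up resting on the same citation as the paper, with the added liability that the elementary justification you sketch for it is not valid.
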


\begin{theorem}[Harnack's curve theorem for real algebraic space curves]\label{thm:Harnack}
For each integer $b\geq 1$, there is a constant $C_2(b)$ so that the following holds. Let $\gamma$ be an irreducible algebraic curve in $\reals^3$ that is defined by polynomials of degree at most $b$. Then $\gamma$ is a union of at most $C_2(b)$ connected components. The same result holds if $\gamma\subset\reals^2$ is an irreducible algebraic plane curve. 
\end{theorem}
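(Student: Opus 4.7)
The plan is to invoke the classical Oleinik-Petrovsky-Thom-Milnor bound on the sum of Betti numbers of a real algebraic variety. Since $\gamma$ is the common zero set of finitely many polynomials of degree at most $b$ in $\reals^3$, this bound immediately yields $b_0(\gamma) \leq C_2(b)$ for some constant depending only on $b$, proving the three-dimensional statement. In fact irreducibility is not actually needed for this topological bound — it holds for any real algebraic set defined by polynomials of bounded degree. The plane-curve case follows by applying the same Milnor-Thom bound in $\reals^2$, or by appealing to the sharper classical Harnack theorem for plane curves.

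For a more hands-on and quantitative argument, I would use Milnor's critical-point technique. First I would choose a generic linear form $\ell:\reals^3\to\reals$ (say $\ell(x,y,z) = z$) so that its restriction to $\gamma$ is Morse. Every bounded connected component of $\gamma$ must contain a point at which $\ell|_\gamma$ attains its minimum; such critical points are cut out by $\gamma$'s defining equations together with a Jacobian-type vanishing condition, all of degree $O(b)$. Since $\gamma$ is one-dimensional and irreducible, this critical set is zero-dimensional, and a Bezout-type estimate bounds its cardinality by a constant depending only on $b$. Unbounded components are then handled by intersecting $\gamma$ with a sphere $\partial B_R$ of sufficiently large generic radius: every unbounded component meets $\partial B_R$, the intersection is a zero-dimensional algebraic set of degree $O(b)$, and its size is again controlled by Bezout. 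Summing the two contributions gives the desired uniform bound $C_2(b)$.

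The main obstacle I anticipate is not conceptual but technical: verifying that one can choose $\ell$ and $R$ generically so that the relevant critical loci are indeed zero-dimensional, and handling singular points of $\gamma$ where the Jacobian condition degenerates. Irreducibility of $\gamma$ is convenient here because it prevents an entire component of $\gamma$ from being contained in the singular locus (which would otherwise inflate the critical-point count by a positive-dimensional contribution), but a careful use of Milnor-Thom sidesteps this issue entirely. Either route yields the stated constant $C_2(b)$ depending only on $b$.
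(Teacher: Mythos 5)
Your first paragraph is already a complete and correct proof, and it is in the same spirit as what the paper does: the paper gives no argument of its own but states the theorem as a special case of the refined Betti-number bounds of Barone and Basu \cite{BB-12}, whereas you invoke the classical Oleinik--Petrovskii--Thom--Milnor bound. For the purpose of producing \emph{some} constant $C_2(b)$ the classical bound is entirely sufficient (replace the defining system $f_1=\dots=f_m=0$ by the single equation $\sum f_i^2=0$ of degree $2b$, so the bound is independent of $m$), and as you correctly note, irreducibility plays no role. The one point worth making explicit is the reading of the hypothesis: the argument needs $\gamma$ to \emph{be} the common zero set of polynomials of degree at most $b$, not merely an irreducible component of such a zero set (a component could have more connected components than the ambient variety has, since its pieces may be glued to other components); the paper's usage is consistent with the former reading, so this is fine. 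Your second, critical-point argument is essentially a sketch of the proof of the Milnor--Thom bound for $b_0$ and is not needed given the first paragraph; it has the technical gaps you yourself identify (Morse genericity, singular points, minima attained at singularities), but since the citation route already closes the proof, these do not affect correctness. The practical advantage of the Barone--Basu formulation used in the paper is that its bounds are sensitive to the dimension of the variety, which matters for the companion B\'ezout-type statement (Theorem~\ref{thm:Bezout}) but not for this one.
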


We remark that the constants $C_1(b)$ and $C_2(b)$ can be computed explicitly, but this is not required by our analysis.

Hereafter we fix a parameter $D \geq 1$, which will play a role analogous to the parameter $D$ from Theorem \ref{thm:guth}. As a minor technicality, we would like to assume that $D$ is not much larger than $n^{1/2}$. The following lemma allows us to dispense with the case when $D$ is large compared to $n^{1/2}$

\begin{lemma}
For each integer $b\geq 1$, there is a constant $C_3(b)$ so that the following holds. Let $\L$ be a collection of $n$~irreducible algebraic curves in~$\reals^3$, each defined by polynomials of degree at most $b$, and let $D>C_3(b)n^{1/2}$. Then there is a polynomial $P$ of degree at most~$D$ that vanishes on each curve in $\L$. This polynomial can be constructed in $O(\poly(D))$ time.
\end{lemma}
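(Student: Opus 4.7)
The plan is a dimension-counting argument combined with Theorem~\ref{thm:Bezout}. The space of trivariate real polynomials of total degree at most~$D$ is a real vector space of dimension $\binom{D+3}{3} \ge c_0 D^3$ for an absolute constant $c_0 > 0$. By Theorem~\ref{thm:Bezout}, a polynomial of degree at most~$D$ that vanishes at $C_1(b) D + 1$ distinct points of an irreducible curve $\gamma$ defined by polynomials of degree at most~$b$ must vanish identically on $\gamma$. Hence the requirement ``$P$ vanishes on $\gamma$'' is implied by $C_1(b) D + 1$ linear conditions on the coefficients of~$P$, namely vanishing at any $C_1(b) D + 1$ suitably chosen sample points of~$\gamma$.

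For each $\gamma \in \L$ I would extract $C_1(b) D + 1$ distinct real points as follows. Since $\gamma$ is a real irreducible algebraic variety of dimension one, it contains infinitely many real points. For generic values $t_1, t_2, \dots$, substituting $z = t_i$ into the defining polynomials of $\gamma$ produces a zero-dimensional system of bivariate polynomials of degree at most~$b$; its real solutions lie on $\gamma$ and can be enumerated in time depending only on~$b$, using the real-RAM model augmented with univariate root finding. Sweeping $t_i$ over enough generic values yields at least $C_1(b) D + 1$ sample points per curve in $O(D)$ time per curve, producing a total sample set $S$ of size $O(nD)$ in $O(nD)$ time. (If some $\gamma$ happens to lie in a horizontal plane so that the $z$-projection is not dominant, one applies the same procedure after a generic rotation of coordinates.)

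The condition ``$P$ vanishes at every point of $S$'' is a homogeneous linear system with at least $c_0 D^3$ unknowns (the coefficients of $P$) and at most $n(C_1(b)D+1)$ equations. Choosing
\[
C_3(b) \;:=\; \bigl\lceil \sqrt{2C_1(b)/c_0} \bigr\rceil,
\]
the hypothesis $D > C_3(b) n^{1/2}$ forces $n(C_1(b)D + 1) < c_0 D^3$, so there are strictly fewer equations than unknowns and a nonzero solution~$P$ exists. By the choice of $S$ and Theorem~\ref{thm:Bezout}, this $P$ vanishes on each curve of~$\L$. The linear system has $O(D^3)$ rows and columns in the relevant regime, so Gaussian elimination produces $P$ in $O(\poly(D))$ time, as required.

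The only non-routine step is sampling points on each curve, where the assumed ability to find roots of bounded-degree univariate polynomials in constant time is essential; beyond that, the argument is standard linear algebra combined with the B\'ezout-type bound of Theorem~\ref{thm:Bezout}, so I do not anticipate any substantial obstacle.
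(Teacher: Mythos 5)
Your proof is correct and follows essentially the same route as the paper's: sample $C_1(b)D+1$ points on each curve, use the dimension count $\binom{D+3}{3} > n(C_1(b)D+1)$ to find a nonzero degree-$D$ polynomial vanishing at all sampled points via Gaussian elimination, and conclude from Theorem~\ref{thm:Bezout} that it vanishes on every curve. The additional detail you give on how to actually extract the sample points (which the paper leaves implicit) and on the explicit choice of $C_3(b)$ is fine and does not change the method.
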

\begin{proof}
Select $C_1(b) D + 1$ points on each curve from $\L$, where $C_1(b)$ is the constant from Theorem \ref{thm:Bezout}, and let $P$ be a polynomial of degree~$D$ that vanishes at these points. Since the vector space of polynomials of degree at most~$D$ has dimension~$\binom{D+3}{3}$, if $C_3(b)$ is sufficiently large (depending only on $C_1(b)$, which in turn depends only on $b$) then the condition $D>C_3(b)n^{1/2}$ implies that~$\binom{D+3}{3}> n(C_1(b) D+1)$, so there must exist a non-zero polynomial that vanishes at the specified points. The coefficients of such a polynomial can be computed in $O(\poly(D))$ time (using, e.g., Gaussian elimination). Since $P$ vanishes on at least $C_1(b) D + 1$ points from each curve in $\L$, Theorem \ref{thm:Bezout} implies that $P$ vanishes on each curve from $\L$.
\end{proof}

Henceforth we will assume that $D=O(n^{1/2})$. Our space decomposition is constructed by two main partitioning steps.
In the first, we iteratively partition space by overlaying the zero sets of polynomials of degree $D \ge 1$,
each of which partitions a subset of $V(\L)$, so that the overall majority of resulting cells meet only
$O(n/D^2)$ curves of $\L$ each, and the remaining cells together cover a small fraction of~$V(\L)$
(Lemma~\ref{lem:firstPartitioning}).  By applying this process $O(\log{D})$ times, we obtain a trivariate
polynomial~$P$ of degree $O(D\log D)$, which partitions space into $O((D\log D)^3)$ open connected cells, each of which either
intersects only $O(n/D^2)$ curves from~$\L$, or contains $O(n^2/D^4)$ points of $V(\L)$ (or both). Cells that intersect $O(n/D^2)$ curves from~$\L$ are called \emph{acceptable}, while the remaining cells are deemed \emph{unacceptable}. This step is performed in Corollary~\ref{cor:firstPartitioning}. 

An unacceptable cell intersects a large number of curves, but the fact that it contains $O(n^2/D^4)$ points from $V(\L)$ allows us to further decompose it into a small number of acceptable subcells. This leads to the second decomposition step, in which we build an \emph{$\eps$-cutting} within each such cell (the $\eps$-cutting is constructed in the $xy$-projection of the cell, but then we apply a lifting in the $z$-direction in order to obtain a three-dimensional decomposition). This is described in Section~\ref{sec:second_step}. The construction is based on the random sampling technique of Clarkson and Shor \cite{CS-89} and of Chazelle and Friedman \cite{CF-90}, which was later adapted by De~Berg and Schwarzkopf \cite{dBS-95} to yield output-sensitive decompositions.

\subsection{The First Decomposition Step: Iteratively Partition Space}
\label{sec:first_step}
We first show the following main property.

\begin{lemma}
  \label{lem:firstPartitioning}
    For each integer $b\geq 1$, there is a constant $C_4(b)$ so that the following holds. Let $\L$ be a collection of $n$ irreducible algebraic curves in~$\reals^3$ in non-vertical general position, each defined by polynomials of degree at most~$b$.  Let $D$ be a positive integer.
  For each non-negative integer $k \geq 0$, there is a set $V_k \subset V(\L)$ and a polynomial $F_k$ with the following properties:
  \begin{enumerate}[(A)]
  \item $\deg(F_k) \leq kD$. 
  \item $\abs{V_k} \leq \abs{V(\L)}/2^k$.
  \item For each open connected component $\Omega$ of ${\reals}^3\setminus Z(F_k)$, at least one of the following holds:
  \begin{enumerate}
  \item[(C.1)] $\Omega$ intersects at most $C_4(b) n/D^2$ curves from $\L$, or
  \item[(C.2)] $\Omega \cap V(\L) \subset V_k$.
  \end{enumerate}
  \end{enumerate}
\end{lemma}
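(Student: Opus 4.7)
The plan is to argue by induction on $k$.

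For the base case $k=0$, I would take $F_0\equiv 1$, so that $\reals^3\setminus Z(F_0)=\reals^3$ is a single connected component, and set $V_0:=V(\L)$. Then $\deg F_0 = 0 \le 0\cdot D$, $|V_0|\le|V(\L)|/2^0$, and (C.2) holds for the unique cell $\Omega=\reals^3$ since $V(\L)\cap\reals^3=V(\L)=V_0$.

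For the inductive step, assume $F_k, V_k$ satisfy (A)--(C). I would apply Theorem~\ref{thm:ams} to the multiset $V_k\subset\reals^3$ with parameter $r$ equal to a sufficiently large constant multiple of $D^3$, obtaining a polynomial $g_k$ with $\deg g_k\le D$ such that every connected component of $\reals^3\setminus Z(g_k)$ meets $O(|V_k|/D^3)$ points of $V_k$. I would then set $F_{k+1}:=F_k\cdot g_k$ and call a connected component $\Omega'$ of $\reals^3\setminus Z(F_{k+1})$ \emph{unacceptable} if it meets more than $C_4(b)n/D^2$ curves of $\L$, and \emph{acceptable} otherwise; finally, define $V_{k+1}:=\bigcup_{\Omega'\text{ unacceptable}}\bigl(V(\L)\cap\Omega'\bigr)$. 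Property (A) follows from $\deg F_{k+1}\le(k+1)D$, and (C) holds by the very definition of $V_{k+1}$.

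The heart of the argument is verifying (B). Because $F_{k+1}$ refines $F_k$, every cell $\Omega'$ of $\reals^3\setminus Z(F_{k+1})$ lies in a unique cell $\Omega$ of $\reals^3\setminus Z(F_k)$, and if $\Omega$ is acceptable so is $\Omega'$; hence every unacceptable $\Omega'$ has an unacceptable parent, and the inductive (C.2) yields $V(\L)\cap\Omega'\subseteq V(\L)\cap\Omega\subseteq V_k$. In particular $V_{k+1}\subseteq V_k$, and I plan to bound
\[
|V_{k+1}|\;\le\;N_{k+1}\cdot\max_{\Omega'}|V_k\cap\Omega'|,
\]
where $N_{k+1}$ is the number of unacceptable cells of $F_{k+1}$. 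The second factor is $O(|V_k|/D^3)$ by the partitioning guarantee for $g_k$, while $N_{k+1}$ will be estimated by combining Theorem~\ref{thm:Bezout} and Theorem~\ref{thm:Harnack} (each curve enters a bounded number of cells per unit of degree) with a curve--cell incidence count against the bad-cell threshold $C_4(b)n/D^2$. Halving then follows by choosing the constants in $r$ and in $C_4(b)$ large enough (depending only on $b$).

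The main obstacle I anticipate is eliminating a spurious factor of $k$ from this accounting: a first-pass incidence count gives $N_{k+1}=O((k+1)D^3/C_4(b))$ and hence $|V_{k+1}|=O((k+1)/C_4(b))\,|V_k|$, which does not contract uniformly in $k$ for any constant $C_4(b)$. Closing this gap is the subtle step. I would do so by counting only those unacceptable subcells that sit inside unacceptable parents of $F_k$ (whose number is controlled inductively, not just by the degree of $F_k$), and by exploiting the multiset/weighted formulation of Theorem~\ref{thm:ams} noted after the general-position discussion, which allows the vertical-visibility multiplicities to absorb the extra factor. Combining these yields $|V_{k+1}|\le|V_k|/2$, which together with the inductive bound $|V_k|\le|V(\L)|/2^k$ establishes (B) and closes the induction.
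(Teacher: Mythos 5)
Your overall architecture matches the paper's: induct on $k$, apply Theorem~\ref{thm:ams} to the multiset of visibility points to get a fresh degree-$D$ polynomial, multiply it into the previous polynomial, and take the new point set to be the visibility points lying in unacceptable cells. The base case, property (A), and property (C) are handled correctly. But the step you flag as the ``main obstacle'' --- removing the factor of $k+1$ from the count of unacceptable cells --- is a genuine gap, and neither of your proposed repairs closes it. The number of unacceptable cells of $F_k$ is not controlled by the induction hypothesis (A)--(C), so ``controlled inductively'' begs the question; and the weighted formulation of Theorem~\ref{thm:ams} has nothing to do with this factor (it is only needed so that a visibility point shared by several curves is counted with the correct multiplicity).

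The missing idea is to define unacceptability and perform the incidence count relative to the cells of the \emph{new} polynomial $g_k$ alone, not relative to the cells of the product $F_{k+1}$ (nor relative to unacceptable parents of $F_k$). Call a cell $\tau$ of $\reals^3\setminus Z(g_k)$ unacceptable if it meets more than $C_4(b)n/D^2$ curves. By Theorems~\ref{thm:Bezout} and~\ref{thm:Harnack}, each curve not contained in $Z(g_k)$ meets at most $C_1(b)D+C_2(b)$ such cells, so there are $O(nD)$ curve--cell incidences and hence $O(D^3/C_4(b))$ unacceptable cells of $g_k$ --- with no dependence on $k$, because $\deg g_k\le D$. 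Every unacceptable cell $\Omega'$ of $F_{k+1}$ is contained in some cell $\tau$ of $g_k$ that is itself unacceptable, and the partitioning guarantee gives $\abs{V_k\cap\tau}=O(\abs{V_k}/D^3)$ for each such $\tau$; summing over the $O(D^3/C_4(b))$ unacceptable cells $\tau$ (rather than multiplying a maximum by the number of unacceptable cells of $F_{k+1}$) yields $\abs{V_{k+1}}\le O(1/C_4(b))\,\abs{V_k}\le \abs{V_k}/2$ once $C_4(b)$ is chosen large enough. This is exactly how the paper argues; with that substitution your proof goes through.
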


\begin{proof}
  First note that the curves from $\L$ fully contained in $Z(F_k)$ can be disregarded, 
  since they do not meet any connected components of $\reals^3\setminus Z(F_k)$, and are therefore irrelevant
  for the assertions of the lemma.
  
  We prove properties (A)--(C) by induction on $k$. For $k=0$, the assertions are satisfied by putting $V_0 = V(\L)$ and $F_0 = 1$.
  For $k \ge 1$, let $V_{k-1}$ be a set of points and let $F_{k-1}$ be a polynomial satisfying properties~(A)--(C) above (with $k-1$).
  Apply Theorem \ref{thm:ams} to find a partitioning polynomial~$f$ of degree at most~$D$ for the multiset of points $V_{k-1}$.
  Each connected component of ${\reals}^3 \setminus Z(f)$ contains $O(\abs{V_{k-1} } /D^3)$ points from~$V_{k-1}$.

  We call a connected component of ${\reals}^3 \setminus Z(f)$ an \emph{acceptable} cell if it intersects at most $C_4(b) n / D^2$ curves from $\L$ (we specify the choice of $C_4(b)$ shortly); otherwise we call it an \emph{unacceptable} cell.

By Theorem \ref{thm:Harnack}, each curve $\gamma \in \L$ contains at most $C_2(b)$ irreducible components. If $\gamma$ is not contained in $Z(f)$ then by Theorem \ref{thm:Bezout}, we have $|\gamma\cap Z(f)|\leq C_1(b)D+1$. This implies that each curve $\gamma \in \L$ that is not contained in $Z(f)$ intersects at most $C_1(b)D+C_2(b)$ connected component from ${\reals}^3 \setminus Z(f)$, and thus there are at most $n(C_1(b)D+C_2(b))$ pairs $(\gamma,\Omega)$, where $\gamma\in\L$ is a curve, $\Omega$ is a connected component from ${\reals}^3 \setminus Z(f)$, and $\gamma$ intersects $\Omega$. Since each unacceptable cell accounts for at least $C_4(b)n/D^2$ of these pairs, at most 
$\frac{n(C_1(b)D+C_2(b))}{C_4(b)n/D^2}\leq \frac{C_1(b)+C_2(b)}{C_4(b)}D^3$ cells are unacceptable. Since each unacceptable cell contains $O(\abs{V_{k-1} } /D^3)$ points from $V_{k-1}$, if we select the constant $C_4(b)$ sufficiently large (depending only on $C_1(b), C_2(b)$ and thus on $b$), 
then at most $\abs{V_{k-1}} / 2$ points from $V_{k-1}$ are contained in unacceptable cells.

Define
\begin{equation}
  \label{eq:v_k}
  V_k \coloneqq \bigcup_\text{$\tau$ unacceptable}
  \tau \cap V_{k-1} ,
\end{equation}
with the union taken over all unacceptable cells~$\tau$ of~$\reals^3\setminus Z(f)$.
To complete the inductive step, we define $F_k \coloneqq F_{k-1} \cdot f$. In other words,
$Z(F_k) = Z(F_{k-1}) \cup Z(f)$. 

Then 
\[
\deg(F_k) = \deg(F_{k-1}) + \deg(f) \le (k-1)D + D = kD,
\] 
so property~(A) is satisfied. We have $\abs{V_k} \leq \abs{V_{k-1}}/2 \leq \abs{V_0}/2^k$, thus property~(B) is satisfied. It remains to verify property~(C). Let $\Omega$ be a connected component of ${\reals}^3\setminus Z(F_k)$; this component is contained in the intersection of some connected component $\Omega'$ of $\reals^3\setminus Z(F_{k-1})$ and a connected component $\tau$ of $\reals^3\setminus Z(f)$. If $\Omega'$ intersects at most $C_4(b) n/D^2$ curves from~$\L$, so does $\Omega$, and (C.1)~holds. On the other hand, if $\Omega'$ intersects more than $C_4(b) n/D^2$ curves from~$\L$, then by property (C.2) of the induction hypothesis, $\Omega' \cap V(\L) \subset V_{k-1}$, and, in fact, $\Omega' \cap V(\L) = \Omega' \cap V_{k-1}$.
If $\Omega$ is unacceptable, then so is $\tau \supseteq \Omega$ and \eqref{eq:v_k}~implies
\[
\Omega \cap V_k = \Omega \cap V_{k-1} = \Omega \cap V(\L) ,
\]
since $\Omega \subset \Omega'$.  Therefore (C.2) holds.
Otherwise, $\Omega$ is acceptable, so it intersects at most $C_4(b) n/D^2$~curves from $\L$ and (C.1)~holds.
Thus property~(C) is satisfied, which concludes
the inductive argument.
\end{proof}

\noindent{\bf Remark:}
We note that, since we apply Theorem \ref{thm:ams} to a multiset of points, it in particular implies that a point with a high multiplicity is more likely to appear on the zero set of the partitioning polynomial.

Applying Lemma \ref{lem:firstPartitioning} to $\L$ with parameter $D$ and $k = 4\lceil\log_2 D\rceil$, and recalling that our implicit constants are allowed to depend on $b$, so in particular $C_4(b) = O(1)$, we conclude: 
\begin{corollary}
  \label{cor:firstPartitioning}
  Let $\L$ be a collection of $n$ irreducible algebraic curves in~$\reals^3$ in non-vertical general position, each defined by polynomials of degree at most~$b$.
  Let $D$ be a positive integer. Then there is a polynomial $P \in \reals[x,y,z]$ of degree $O(D\log D)$ such that ${\reals}^3 \setminus Z(P)$ is a union of $O((D\log D)^3)$ open connected components, which we will call the \emph{cells} of the decomposition, so that for each such cell $\Omega$, at least one of the following holds: either $\Omega$ intersects $O(n/D^2)$ curves from $\L$, or $\Omega$ contains $O(n^2/D^4)$ points of $V(\L)$ (or both).
\end{corollary}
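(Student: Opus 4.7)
The plan is to derive the corollary directly from Lemma~\ref{lem:firstPartitioning} by instantiating it with the right value of $k$ and then invoking Warren's theorem (already cited earlier in the paper) to bound the number of connected components. Concretely, I would set $k \coloneqq 4\lceil \log_2 D \rceil$ and define $P \coloneqq F_k$, where $F_k$ is the polynomial produced by the lemma for this value of $k$.

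First I would verify the degree bound and the count of cells. Property~(A) of the lemma yields $\deg(P) \le kD = O(D \log D)$. Since $P$ is a trivariate polynomial of degree $O(D\log D)$, Warren's theorem guarantees that $\reals^3 \setminus Z(P)$ has $O((D\log D)^3)$ open connected components, which we designate as the cells of the decomposition.

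Next I would turn property~(B) into the desired bound on points of vertical visibility inside a cell. Since the curves in $\L$ are in non-vertical general position, each pair contributes $O(1)$ points to $V(\L)$, so $|V(\L)| = O(n^2)$. Combined with the choice $k = 4\lceil \log_2 D \rceil$, property~(B) gives $|V_k| \le |V(\L)|/2^k \le |V(\L)|/D^4 = O(n^2/D^4)$. Applying property~(C) to any cell $\Omega$ of $\reals^3 \setminus Z(P)$, either $\Omega$ meets at most $C_4(b)\,n/D^2 = O(n/D^2)$ curves from $\L$, or $\Omega \cap V(\L) \subseteq V_k$, in which case $|\Omega \cap V(\L)| \le |V_k| = O(n^2/D^4)$. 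This is exactly the dichotomy claimed in the corollary.

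There is really no hard step here: the work has already been done inside the inductive proof of Lemma~\ref{lem:firstPartitioning}. The only things to notice are (i)~that $k = 4\lceil \log_2 D \rceil$ is the smallest value that drives the factor $2^k$ past $D^4$, which is what we need to convert $|V(\L)| = O(n^2)$ into the target bound $O(n^2/D^4)$; and (ii)~that $kD = O(D \log D)$ keeps the degree — and hence, via Warren, the cell count — at the level advertised in the statement. So the proof is a one-paragraph bookkeeping argument on top of the lemma.
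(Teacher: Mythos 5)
Your proposal is correct and matches the paper's own derivation exactly: the paper obtains the corollary by applying Lemma~\ref{lem:firstPartitioning} with $k = 4\lceil\log_2 D\rceil$, reading off the degree from property~(A), the cell count from Warren's theorem, the bound $|V_k| \le |V(\L)|/2^k = O(n^2/D^4)$ from property~(B) together with $|V(\L)| = O(n^2)$, and the dichotomy from property~(C). Your write-up just makes explicit the bookkeeping that the paper leaves implicit.
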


\noindent{\bf Remark:}
Although at this point the constant $4$ in the choice of $k$ seems arbitrary, it will become clearer in Section~\ref{sec:second_step}.   In particular, this choice is exploited in the proof of Claim~\ref{clm:subcells}.

Next, we will weaken the requirement that the curves in $\L$ are in non-vertical general position, and replace it with the requirement that the curves are in general position.

\begin{corollary}
  \label{cor:firstPartitioningWithVert}
  Let $\L$ be a collection of $n$ irreducible algebraic curves in~$\reals^3$ in general position, each defined by polynomials of degree at most~$b$. Write $\L=\L_1\cup\L_2$, where the curves in $\L_1$ are vertical lines and none of the curves in $\L_2$ are vertical lines. By assumption, the $xy$-projections of each pair of curves from~$\L_2$ have finite intersection.
  Let $D$ be a positive integer. Then there is a polynomial $P \in \reals[x,y,z]$ of degree $O(D\log D)$ such that ${\reals}^3 \setminus Z(P)$ is a union of $O((D\log D)^3)$ open connected components, which we will call the \emph{cells} of the decomposition, where each cell~$\Omega$ intersects~$O(n/D^2)$ curves from $\L_1$. Furthermore, for each such cell $\Omega$, at least one of the following holds: either $\Omega$ intersects $O(n/D^2)$ curves from $\L_2$, or $\Omega$ must contain at most $O(n^2/D^4)$ points of $V(\L_2)$ (or both).
\end{corollary}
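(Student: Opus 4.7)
The plan is to reduce to Corollary~\ref{cor:firstPartitioning} by handling the two subfamilies $\L_1$ and $\L_2$ separately and then multiplying the two resulting partitioning polynomials. The curves in $\L_2$ already satisfy the hypotheses of Corollary~\ref{cor:firstPartitioning}: by assumption none of them is a vertical line and their $xy$-projections have pairwise finite intersection, so they are in non-vertical general position. I would therefore apply Corollary~\ref{cor:firstPartitioning} directly to $\L_2$ with parameter $D$, obtaining a polynomial $P_2 \in \reals[x,y,z]$ of degree $O(D\log D)$ whose zero set partitions $\reals^3$ into $O((D\log D)^3)$ open cells, each of which either meets $O(n/D^2)$ curves of $\L_2$ or contains $O(n^2/D^4)$ points of $V(\L_2)$.

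To control the vertical lines, I would project $\L_1$ to a multiset of points $\L_1^* \subset \reals^2$ (each vertical line collapses to a single point) and apply the planar case of Theorem~\ref{thm:ams} to $\L_1^*$ with $r = D^2$. This produces a bivariate polynomial $q$ of degree $O(D)$ such that each of the $O(D^2)$ connected components of $\reals^2\setminus Z(q)$ contains $O(\abs{\L_1}/D^2) = O(n/D^2)$ points of $\L_1^*$. View $q$ as a trivariate polynomial $Q(x,y,z)\coloneqq q(x,y)$; its zero set $Z(Q) = Z(q)\times\reals$ is a union of vertical cylinders, so each connected component of $\reals^3\setminus Z(Q)$ has the form $C\times\reals$ for some connected component $C$ of $\reals^2\setminus Z(q)$. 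Hence a vertical line $\{(x_0,y_0,z):z\in\reals\}$ lies entirely in the cylinder above the planar cell containing $(x_0,y_0)$, and each cell of $\reals^3\setminus Z(Q)$ meets at most $O(n/D^2)$ vertical lines of $\L_1$.

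Finally, I would set $P \coloneqq P_2\cdot Q$, so that $Z(P) = Z(P_2)\cup Z(Q)$ and $\deg(P) = O(D\log D) + O(D) = O(D\log D)$. Warren's theorem then bounds the number of connected components of $\reals^3\setminus Z(P)$ by $O((D\log D)^3)$. Every cell $\Omega$ of this refined decomposition is simultaneously contained in a cell $\Omega_2$ of $\reals^3\setminus Z(P_2)$ and a cell $\Omega_Q$ of $\reals^3\setminus Z(Q)$. From $\Omega\subseteq\Omega_Q$ one obtains the desired bound $O(n/D^2)$ on the number of vertical lines from $\L_1$ meeting $\Omega$, and from $\Omega\subseteq\Omega_2$ one inherits the dichotomy from Corollary~\ref{cor:firstPartitioning}: $\Omega$ either intersects $O(n/D^2)$ curves of $\L_2$ or contains $O(n^2/D^4)$ points of $V(\L_2)$.

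There is essentially no hard step here, since Corollary~\ref{cor:firstPartitioning} has already carried out the main inductive decomposition. The only point that requires a brief verification is that multiplying $P_2$ by a $Q$ that depends only on $(x,y)$ correctly refines the cell structure without inflating the degree beyond $O(D\log D)$ or the cell count beyond $O((D\log D)^3)$; this follows from $Z(P_2\cdot Q) = Z(P_2)\cup Z(Q)$ together with Warren's bound on components of the complement of a real variety.
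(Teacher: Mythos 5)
Your proposal is correct and is essentially the paper's own proof: project the vertical lines of $\L_1$ to points in the $xy$-plane, apply Theorem~\ref{thm:ams} there, lift the resulting bivariate partitioning polynomial to a $z$-independent trivariate one, and multiply it with the output of Corollary~\ref{cor:firstPartitioning} applied to $\L_2$. The only difference is cosmetic (naming and the explicit invocation of Warren's theorem for the final cell count, which the paper leaves implicit).
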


\begin{proof}
Let $\L_1^*\subset\reals^2$ be the set of points obtained by intersecting the curves in $\L_1$ with the $xy$-plane. Apply Theorem \ref{thm:ams} to find a partitioning polynomial~$P_1$ of degree at most~$D$ for~$\L_1^{*}$. Each connected component of ${\reals}^2 \setminus Z(P_1)$ contains $O(\abs{\L_1^{*}} /D^2)$ points from~$\L_1^{*}$. We now consider $P_1$ as a polynomial~$P_1(x,y,z)$ of three variables ($P_1(x,y,z)$ is independent of $z$, so $Z(P_1(x,y,z))$ is obtained by lifting~$Z(P(x,y))$ in the  $z$-direction). 
Let $P_2$ be the output of Corollary \ref{cor:firstPartitioning} applied to $\L_2$, and define~$P \coloneqq P_1P_2$.
\end{proof}

\subsection{The Second Decomposition Step: Random Sampling}
\label{sec:second_step}

In this section we show how to further refine the decomposition obtained in Corollary~\ref{cor:firstPartitioningWithVert}, so
that all cells are acceptable. 

Write $\L=\L_1\cup\L_2$, as in the statement of Corollary~\ref{cor:firstPartitioningWithVert}. Fix an unacceptable cell $\Omega \in {\reals}^3 \setminus Z(P)$. From Corollary~\ref{cor:firstPartitioning}, it follows that
$\Omega$ contains $O(n^2/D^4)$ points of $V(\L_2)$ (counting with multiplicity).
Let $\L_{\Omega} \subseteq \L_2$ be the subset of curves that meet $\Omega$.
We now intersect each curve $\gamma \in \L_{\Omega}$ with $\Omega$.
Let $S_{\Omega}$ be the collection of the resulting open curve segments lying in $\Omega$,
and let $S^{*}_{\Omega}$ be the set of their projections onto the $xy$-plane.
Recall that we allow the curves in $\L_2$ to self-intersect.  Moreover, a vertical line might intersect a curve of $\L_2$ at several points. 
This implies that the projected curves in $S^{*}_{\Omega}$ may form self-intersections, of which we dispose as follows. Each projected curve $\gamma\in S^{*}_{\Omega}$ is contained in the zero set of a square-free polynomial $g_\gamma$ of degree $O(1)$, where the implicit constant depends only on $b$. By Theorem \ref{thm:Harnack}, $Z(g)$ contains $O(1)$ connected components. By \cite[Lemma 2.3]{SZ-17} and B\'ezout's theorem, $Z(g)\backslash Z(\partial_y g)$ is a union of $O(1)$ $x$-monotone (open) Jordan arcs, where the implicit constant depends only on $b$. Let $W_\Omega$ be the set of such arcs; we have $\abs{W_\Omega} = O(\abs{S^{*}_{\Omega}})$, where again, the constant of proportionality depends on the degree $b$.

\paragraph{Planar arrangements and $(1/r)$-cuttings.}

We will recall a few standard definitions about arrangements of curves; see \cite{AS_handbook-00} for further details. Given a set $W$ of algebraic arcs in the plane, the \emph{arrangement} $\A(W)$ of $W$ is the partition of the plane induced
by the arcs in $W$ into vertices, edges, and faces, where a \emph{vertex} is either an endpoint of an arc or the intersection point
of a pair of arcs, an \emph{edge} is a (relatively open) portion of an arc delimited by two consecutive vertices, and a \emph{face}
is a maximal connected open planar region that is disjoint from the arcs and vertices in $W$. We say that a face \emph{meets} an edge from $\A(W)$ if the face intersects the edge.
The total complexity of the arrangement $\A(W)$ is the overall number of its vertices, edges, and faces.
The \emph{vertical decomposition} of $\A(W)$ is a partition of the faces of $\A(W)$ into pseudo-trapezoidal faces, by erecting upward and downward vertical walls from each vertex of $\A(W)$ until it hits the first vertex or edge of $\A(W)$, or continuing to $\infty$ (or $-\infty$) if there is no such edge or vertex. 

Let $r > 0$ be a real parameter to be fixed shortly. Our goal is to construct a \emph{$(1/r)$-cutting} 
for the arcs in $W_\Omega$.  This is a partition of the plane into constant-complexity faces (in our case
these are pseudo-trapezoidal faces determined by the vertical decomposition of the two-dimensional arrangement of some arcs from~$W_{\Omega}$ \cite{AS_handbook-00}), each of which meets at most $\abs{W_\Omega}/r$ of the arcs in $W_{\Omega}$.
From \cite[Lemma~2.2]{dBS-95} it follows that there is a $(1/r)$-cutting for $W_{\Omega}$ consisting
of $O(\tau(r))$ pseudo-trapezoidal faces. Here $\tau(r)$ is the expected number of faces in the vertical decomposition\footnote{The original formulation in \cite{dBS-95} is for canonical triangulations, but in our case, they can be replaced with vertical decompositions.}
$\VD(R^{*}_{\Omega})$ of the arrangement $\A(R^{*}_{\Omega})$ of a random subset $R^{*}_{\Omega} \subseteq  W_{\Omega}$, where every arc in $W_{\Omega}$ is drawn independently with probability $p \coloneqq \frac{c D^2}{n}$ for a fixed 
constant $c > 0$. We set $r \coloneqq p \abs{W_{\Omega}}$.
The expected number of pseudo-trapezoidal faces in $\VD(R^{*}_{\Omega})$ is proportional to the expected complexity of the arrangement $\A(R^{*}_{\Omega})$ \cite{AS_handbook-00}). 
We next show the following claim.

\begin{claim}
  \label{clm:subcells}
  Letting $m_{\Omega} \coloneqq \EE[\abs{R^{*}_{\Omega}}]$,
  the expected complexity of $\A(R^{*}_{\Omega})$ is $O(m_{\Omega} + 1)$,
  where $\EE[\cdot]$ denotes expectation.
\end{claim}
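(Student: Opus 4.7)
My plan is to bound the expected combinatorial complexity of $\A(R^*_\Omega)$ by controlling separately the expected number of arcs in $R^*_\Omega$ and the expected number of vertices of the arrangement, since for $x$-monotone Jordan arcs the total complexity (vertices, edges, and faces) is proportional to the number of arcs plus the number of vertices, by Euler's formula. The expected number of arcs is $\EE[\abs{R^*_\Omega}] = m_\Omega$ by definition of $m_\Omega$, and each arc contributes at most $O(1)$ endpoints, giving an $O(m_\Omega)$ contribution from endpoint vertices. It therefore remains to estimate the expected number of crossing vertices.

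I would split the crossings of $\A(R^*_\Omega)$ into two types: (i)~crossings of two arcs of $W_\Omega$ originating from the same underlying projected curve $\gamma \in S^*_\Omega$, and (ii)~crossings of arcs originating from distinct curves. For type~(i), each projected curve $\gamma$ contributes $O(1)$ $x$-monotone arcs to $W_\Omega$, and by B\'ezout's theorem any two such arcs cross $O(1)$ times, so the total number of type~(i) crossings in $W_\Omega$ is $O(\abs{W_\Omega})$. Since each crossing is retained in $R^*_\Omega$ with probability exactly $p^2$ (independent sampling of arcs), the expected type~(i) contribution is $O(p^2 \abs{W_\Omega}) = O(p \cdot m_\Omega) \leq O(m_\Omega)$, using $p \leq 1$ (the case $p > 1$ is trivial, taking $R^*_\Omega = W_\Omega$ and arguing directly).

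For type~(ii) crossings, I would charge each crossing point $q \in \gamma_1^* \cap \gamma_2^*$ lying in $\Omega^*$ (for $\gamma_1,\gamma_2 \in \L_\Omega$ distinct) to the weighted multiset $V(\L_2) \cap \Omega$: the existence of arcs of $W_\Omega$ from $\gamma_1$ and $\gamma_2$ passing through $q$ means that there exist lifts $p_1 \in \gamma_1 \cap \Omega$ and $p_2 \in \gamma_2 \cap \Omega$ with $p_1^* = p_2^* = q$, so $(p_1, p_2)$ contributes at least one unit of multiplicity to $V(\L_2) \cap \Omega$. By Corollary~\ref{cor:firstPartitioningWithVert}, since $\Omega$ is unacceptable, $\abs{V(\L_2) \cap \Omega} = O(n^2/D^4)$, and therefore the number of type~(ii) crossings is $O(n^2/D^4)$. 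Sampling with probability $p^2 = O(D^4/n^2)$ gives an expected type~(ii) contribution of $O(1)$. Summing the two contributions with the endpoint and arc bounds yields the claimed $O(m_\Omega + 1)$.

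The main obstacle I anticipate is the type~(ii) charging. One has to carefully account for multiplicities in $V(\L_2)$ (which can exceed one when several curves pass through the same $3$-space point, or when several projected curves pass through the same $xy$-point), and one must ensure that crossings of $W_\Omega$-arcs occurring outside $\Omega^*$ either do not affect the arrangement we truly care about, or are eliminated by suitably clipping the arcs of $W_\Omega$ to $\Omega^*$ before performing the random sampling. The remaining steps are routine planar-arrangement combinatorics combined with linearity of expectation.
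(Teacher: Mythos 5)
Your argument is correct and takes essentially the same route as the paper: both charge the crossing vertices of $\A(W_\Omega)$, counted with multiplicity as pairs of arcs, to the $O(n^2/D^4)$ points of vertical visibility contained in the unacceptable cell $\Omega$, multiply by the retention probability $p^2 = O(D^4/n^2)$ to get an expected $O(1)$ contribution, and add the $O(m_\Omega)$ contribution of the sampled arcs and their endpoints. The only cosmetic differences are that the paper disposes of your type-(i) crossings before forming $W_\Omega$ (the $x$-monotone arcs are taken from $Z(g_\gamma)\setminus Z(\partial_y g_\gamma)$, which already excludes the self-crossings of each projected curve), and it bounds the expected weight of a high-multiplicity vertex via a geometric series requiring $p<1/2$, whereas your per-pair linearity-of-expectation computation reaches the same bound more directly; your closing caveat about clipping the arcs so that every surviving crossing genuinely lifts to a visibility pair inside $\Omega$ is exactly the point the paper relies on implicitly.
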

\begin{proof}
  
  Let $X$ be the set of vertices of $\A(W_\Omega).$ For each vertex $x\in X$, let $w(x)$ be the weight of $x$, that is, the number of pairs of curves from the arrangement $\A(W_\Omega)$ that contain $x$. Since $\Omega$ contains $O(n^2/D^4)$ points of $V(\L)$ (counting with multiplicity), we have that $\sum_{x\in X}w(x) = O(n^2/D^4)$, and thus in particular\footnote{The number of vertical visibilities might be considerably smaller, as such a visibility is relevant for a pair of curve segments $\gamma_1, \gamma_2 \in S_{\Omega}$ only if both points $v_1 \in \gamma_1$ and $v_2 \in \gamma_2$ (which lie vertically above the other) are contained in $\Omega$, but it may happen that only one of $v_1, v_2$ is in $\Omega$.} $|X|=O(n^2/D^4)$.

  Let $R^{*}_{\Omega}$ be a random subset of $W_\Omega$, where every arc in $W_{\Omega}$ is drawn independently with probability $p \coloneqq \frac{c D^2}{n}$. Since by our earlier assumption $D=O(n^{1/2})$, if $c$ is chosen sufficiently small (depending on $b$), then we can ensure that $p<1/2$. For each $x\in X$, let $w^*(x)$~be the weight of the vertex $x$ in this random set. Since $p<1$, we have
\[
  \EE[w^*(x)]=\sum_{k=2}^{w(x)}\binom{k}{2} p^k < \sum_{k=2}^\infty \binom{k}{2}p^k = \frac{p^2}{(1-p)^3}.
\]
By linearity of expectation, the expected combined weight of the vertices in $\A(R^{*}_{\Omega})$ is
\[
  \EE\Big[\sum_{x\in X}w^*(x)\Big]=\sum_{x\in X}\EE[w^*(x)]< \sum_{x\in X} \frac{p^2}{(1-p)^3} =  \frac{p^2}{(1-p)^3}|X|.
\]
Having $p < 1/2$ as above, we obtain $|X| \frac{p^2}{(1-p)^3}\leq 8p^2|X|=O(1)$. The claim now follows from the fact that the arrangement complexity of $\A(R^{*}_{\Omega})$ is bounded (up to multiplicative constants) by the number of elements in $R^{*}_{\Omega}$ plus the number of intersections between pairs of curves in the corresponding arrangement.
\end{proof}

We next bound the total expected complexity of $\A(R^{*}_{\Omega})$ (and thus the total number of faces in~$\VD(R^{*}_{\Omega})$),
over all unacceptable cells $\Omega \in {\reals}^3 \setminus Z(P)$.
Put \[
  W \coloneqq \bigcup_{\text{$\Omega$ unacceptable}}
  W_{\Omega} ,
\]
with the disjoint union taken over all unacceptable cells $\Omega$ of $\reals^3 \setminus Z(P)$,
and recall that within each such cell $\Omega$ an arc of $W_{\Omega}$ is selected independently with probability $p = \frac{c D^2}{n}$.
Since $\deg(P) = O(D \log{D})$, we have $\abs{W} = O(n D\log{D} + n)$, as we started with $n$ curves and cut them into pieces at the $O(n D\log{D})$ points of intersection with $Z(P)$.  We also recall that we cut the $xy$-projections of these pieces into $x$-monotone Jordan arcs (which only increases the number of arcs by a constant factor that depends on $b$).
Therefore the total expected number of arcs in the samples $R^{*}_{\Omega}$, over all unacceptable
cells $\Omega$, is $O(p n D \log{D}) = O(D^3 \log{D})$.

Claim~\ref{clm:subcells} now implies that
\begin{equation}
  \label{eq:second_stage_cells}
  \EE\Bigl[ \sum_\Omega
  ( \abs{R^{*}_{\Omega}} + 1) \Bigr] =
  O(D^3 \log^3{D}) ,
\end{equation}
with the summation taken over all unacceptable cells $\Omega$ of $\reals^3 \setminus Z(P)$.
In other words,  we have just shown that the expected total number of faces in $\VD(R^{*}_{\Omega})$ over all such cells $\Omega$
is $O(D^3 \log^3{D})$.

We finally describe the actual refinement of the unacceptable cells~$\Omega$.
Each pseudo-trapezoidal face~$\Delta^{*}_{\Omega} \in \VD(R^{*}_{\Omega})$ is turned into a vertical prism~$\sigma$ by taking its Cartesian product with the $z$-axis.
We now form intersections of $\Omega$ with each prism $\sigma$; $\Omega$ is only intersected with prisms arising from the pseudo-trapezoidal faces of its own decomposition $\VD(R^{*}_{\Omega})$.
We refer to these intersections as the (open) \emph{second-stage cells} and observe that they might not be connected, since $\Omega$ needs not be $xy$-monotone. Despite this oddity, our decomposition does have the desired properties. 

Indeed, since each second-stage cell $\xi=\Omega \cap \sigma$ corresponds to a unique pseudo-trapezoid~$\Delta^{*}_{\Omega}$, the overall expected number of second-stage cells is $O(D^3 \log^3{D})$.
By the properties of $(1/r)$-cuttings, each pseudo-trapezoidal face $\Delta^{*}_{\Omega} \in \VD(R^{*}_{\Omega})$
meets $|W_{\Omega}|/r$ arcs of $W_{\Omega}$. Since $r = \frac{c D^2}{n} \cdot \abs{W_{\Omega}}$, 
each $\Delta^{*}_{\Omega}$ meets $O(n/D^2)$ arcs of $W_{\Omega}$.
Therefore $\Delta^{*}_{\Omega}$ meets $O(n/D^2)$ curves of $\L_2^{*}$.\footnote{This is potentially an overestimate, since $\Delta^{*}_{\Omega}$ may meet several arcs of the same original curve of $\L_2$.} So the number of curves from $\L_2$ met by $\sigma$, and, in particular, by the actual cell $\xi = \Omega \cap \sigma$ is $O(n/D^2)$, as claimed. Since $\xi$ is a subset of a cell from Corollary \ref{cor:firstPartitioningWithVert}, we have that $\xi$ intersects $O(n/D^2)$ curves from $\L_1$. Thus $\xi$ intersects $O(n/D^2)$ curves from $\L$. 

Recall that, by Corollary~\ref{cor:firstPartitioningWithVert}, the number of the remaining (that is, acceptable) cells in ${\reals}^3 \setminus Z(P)$ is $O(D^3 \log^3{D})$, and each of these cells meets $O(n/D^2)$ curves from $\L$.
To summarize, in both levels of the decomposition we obtain $O(D^3 \log^3{D})$ cells in total, each meeting $O(n/D^2)$ curves of $\L$.

\paragraph{Wrapping up.}
We claim that the cell decomposition described above satisfies the properties stated in Theorem~\ref{thm:cell_decomposition}. We state these properties more formally below.

\begin{theorem}[Theorem \ref{thm:cell_decomposition} restated]
  \label{thm:cell_decomposition_restated} 
  Let $\L$ be a collection of $n$ irreducible algebraic curves in~$\reals^3$, each defined by polynomials of degree at most $b$. Suppose that the projections of each pair of curves from $\L$ to the $xy$-plane have finite intersection.  Let $D$ be a positive integer. Then there is a number $N = O(D^3\log^3 D)$ and a partition $\reals^3 = Z\cup \bigcup_{i=1}^N K_i$ (into a boundary~$Z$ and cells~$K_i$)
  with the following properties. 
  \begin{itemize}
  \item Each $K_i$ is an open (not necessarily connected) cell, consisting of a union of connected components of $\reals^3\setminus Z$.
  \item Each such cell intersects $O(n/D^2)$ curves from $\L$.
  \item The interior of $Z$ is empty, and there is a trivariate polynomial $P$ of degree $O(D\log D)$, with $Z(P)\subset Z$.
  \item The curves from $\L$ not contained in $Z(P)$ intersect $Z$ in relatively few points, excluding a subset $\L' \subset \L$ of $O(D^3 \log{D})$ curves.
    Specifically,
    \begin{equation}
      \label{numberOfCurveBoundaryIntersections}
      \sum_{\substack{\gamma \in \L \setminus \L' \\ \gamma \not\subset Z(P) }} \abs{\gamma\cap Z} = O(n D\log^3 D).
    \end{equation}
  \end{itemize}
  This partition can be computed in $O_D(n^2)$ randomized expected time, where the algorithm outputs for each cell $K_i$ the list of curves from $\L$ that it intersects.
  For the special case where $\L$ is a set of lines in $3$-space that satisfy a mild general position requirement\footnote{
    See Section~\ref{sec:lines_alg}.}
  the expected running time improves to $O_D(n^{4/3} \polylog{n})$.
\end{theorem}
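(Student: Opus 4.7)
The plan is to assemble the construction developed in Sections~\ref{sec:first_step} and~\ref{sec:second_step}. Apply Corollary~\ref{cor:firstPartitioningWithVert} to obtain the polynomial $P$ of degree $O(D\log D)$, keep its acceptable cells as final cells $K_i$, and subdivide each unacceptable first-stage cell $\Omega$ into second-stage cells $\Omega\cap\sigma$, where $\sigma$ ranges over the vertical prisms over the pseudo-trapezoids of $\VD(R^{*}_{\Omega})$. Define $Z$ to be the union of $Z(P)$ with the vertical walls of all such prisms. The cell count $N=O(D^{3}\log^{3}D)$ and the $O(n/D^{2})$ per-cell curve bound are exactly what~\eqref{eq:second_stage_cells} and the discussion preceding the theorem deliver. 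Define $\L'\subset\L_2$ to consist of those curves at least one of whose $xy$-projection arcs was drawn into some sample $R^{*}_{\Omega}$ (such a curve lies along a piece of a wall and must be excluded to keep $|\gamma\cap Z|$ finite); its expected cardinality is at most the total expected number $\sum_{\Omega}\EE[|R^{*}_{\Omega}|]=p\,|W|=O(D^{3}\log D)$ of sampled arcs.

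For the boundary-crossing bound~\eqref{numberOfCurveBoundaryIntersections}, split $|\gamma\cap Z|=|\gamma\cap Z(P)|+|\gamma\cap(\text{walls})|$. Theorem~\ref{thm:Bezout} applied to $P$ yields $|\gamma\cap Z(P)|=O(D\log D)$, summing to $O(nD\log D)$ over $\gamma$. For the walls, whenever $\gamma\in\L\setminus\L'$ has an arc of its projection $\gamma^{*}$ meeting a pseudo-trapezoidal face $f\in\VD(R^{*}_{\Omega})$, this meeting converts to $O(1)$ transverse crossings of $\gamma$ with the walls of the prism over $f$, since $\gamma$ is not vertical and is not among the sampled curves. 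Hence
\[
\sum_{\substack{\gamma\in\L\setminus\L'\\ \gamma\not\subset Z(P)}}|\gamma\cap(\text{walls})|\;=\;O\Bigl(\sum_{\Omega}\sum_{f\in\VD(R^{*}_{\Omega})}\#\{\alpha\in W_{\Omega}:\alpha\text{ meets }f\}\Bigr).
\]
The $(1/r)$-cutting property bounds the inner sum by $|\VD(R^{*}_{\Omega})|\cdot|W_{\Omega}|/r=|\VD(R^{*}_{\Omega})|/p$. Taking expectations via Claim~\ref{clm:subcells} gives $O(|W_{\Omega}|+1/p)$ per cell, and summing over the first-stage cells produces $O(|W|)+O((D\log D)^{3}/p)=O(nD\log D)+O(nD\log^{3}D)=O(nD\log^{3}D)$, as required.

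For the running time in the general-curves case, I would first build $V(\L)$ in $O(n^{2})$ time by inspecting all curve pairs, then apply Theorem~\ref{thm:ams} at each of the $O(\log D)$ iterations of Lemma~\ref{lem:firstPartitioning} with $r=O(D^{3})$ to a multiset of size at most $|V(\L)|=O(n^{2})$; this already costs $O_D(n^{2})$. Classifying cells as acceptable or unacceptable, computing each $W_{\Omega}$ and $R^{*}_{\Omega}$, building the vertical decompositions, and outputting the list of curves meeting each $K_i$ can all be carried out in a further $O_D(n^{2})$ time by direct enumeration of curve-cell incidences, whose total count is bounded by the $O(nD\log^{3}D)$ estimate proved above. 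For the lines case I defer to Section~\ref{sec:lines_alg}, which avoids the $\Theta(n^{2})$ cost of enumerating $V(\L)$ by representing it implicitly and invokes three-dimensional range-searching and batched point-location machinery tailored to lines to execute the same steps in $O_D(n^{4/3}\polylog n)$ time. The main obstacle is precisely this last speedup: obtaining sub-quadratic time demands both an implicit weighted description of $V(\L)$ still compatible with Theorem~\ref{thm:ams} and efficient line-vs-surface incidence machinery to replace the naive pairwise tests; the general-curves argument gives up on both and settles for the quadratic bound.
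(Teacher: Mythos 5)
Your proposal is correct and follows essentially the same route as the paper: assemble the two-level construction, take $Z=Z(P)\cup(\text{prism walls})$, put into $\L'$ the curves whose projected arcs were sampled into some $R^{*}_{\Omega}$, and charge the remaining wall crossings at $O(1)$ per curve--cell (equivalently, arc--face) incidence to get the $O(nD\log^{3}D)$ bound, with the running time handled exactly as in Sections~\ref{sec:alg_aspects} and~\ref{sec:lines_alg}. Your accounting of the wall crossings via $\sum_{\Omega}\EE\bigl[|\VD(R^{*}_{\Omega})|\bigr]/p$ is a slightly more explicit version of the paper's ``number of cells times curves per cell'' count, but it is the same argument.
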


The analysis of $(1/r)$-cuttings in \cite{dBS-95} guarantees that there exists a choice of the random samples~$R^{*}_{\Omega}$, such that each of the unacceptable cells has been subdivided into subcells that intersect $O(n/D^2)$ curves from $\L$. If a curve $\gamma\in\L$ is not contained in $Z(P)$, then it intersects $Z(P)$ in $O(D\log D)$ points. Thus the total number of intersections between curves in $\L$ not contained in~$Z(P)$ and~$Z(P)$ is $O(n D\log D)$.  A curve $\gamma$ (not contained in $Z(P)$) intersects a vertical wall of a second-level cell in $O(1)$ points, if $\gamma^{*}$ does not have any curve segments comprising the sets $W_{\Omega}$ defined above, which participate in the samples $R^{*}_{\Omega}$.
Otherwise, $\gamma$ intersects some of the vertical walls of the second-level cells constructed within $\Omega$ in a curve segment.
Curves $\gamma$ of the latter kind comprise the set $\L' \subset \L$, and, as argued above, their total expected number is $O(D^3 \log{D})$. 
Thus the total number of intersections between curves from $\L \setminus \L'$ (not contained in $Z(P)$) and cells, resulting either in the first or second stage of the decomposition, is $O( (D\log D)^3 (n/D^2)) = O(n D\log^3D)$.
This establishes \eqref{numberOfCurveBoundaryIntersections}.

The implementation details concerning the expected running time of the algorithm, as stated in Theorem~\ref{thm:cell_decomposition_restated}, are given in Section~\ref{sec:alg_aspects} below.

\begin{remark}
  In higher dimensions, the analogue of Theorem \ref{thm:cell_decomposition_restated} is a decomposition of $\reals^d$ into $O( (D\log D)^d)$ cells, each of which intersects $O(n/D^{d-1})$ curves from $\L$. We remark that most of the steps from the proof of Theorem \ref{thm:cell_decomposition_restated} extend to higher dimensions. The main difficulty in this extension is handling a curve from $\L$ that projects to a point in the $xy$-plane.  Unlike the three-dimensional case, such a curve is not necessarily ``vertical'' (i.e., parallel to an axis), and therefore
  the solution from 
  Corollary~\ref{cor:firstPartitioningWithVert} does not apply.

  One simple way to resolve this issue is to impose an additional general position assumption on the curves in $\L$. For example, we could require that for each curve $\gamma\in\L$, any fiber of the projection map $\pi \colon \gamma\to\reals^2$ from $\gamma$ to the $xy$-plane has finite cardinality. With this additional assumption, the proof of Theorem~\ref{thm:cell_decomposition_restated} extends with only minor modifications. Indeed, Lemma~\ref{lem:firstPartitioning} extends to $\reals^d$ by applying it with $k=(2d-2)\lceil\log_2 D\rceil$, which yields an analogue of Corollary~\ref{cor:firstPartitioning}, where each acceptable cell intersects $O(n/D^{d-1})$ curves from $\L$, and each unacceptable cell contains $O(n^2 / D^{2d-2})$ points of vertical visibility. The analysis of the second-level cell decomposition proceeds almost verbatim where we produce  a planar vertical decomposition. In this extension the sampling probability $p$ becomes $cD^{n-1}/n$ (we can assume that $D=O(n^{1/(d-1)})$ and thus $p<1/2$, since otherwise there exists a polynomial of degree $O(D)$ whose zero set contains all of the curves in $\L$), and the planar pseudo-trapezoids (once constructed) are lifted to prisms in $\reals^d$ in all $d-2$ residual coordinates.
\end{remark}

\subsection{Algorithmic Aspects}
\label{sec:alg_aspects}

We now outline how to efficiently implement Theorem~\ref{thm:cell_decomposition_restated}. This involves identifying curves that are non-vertical lines and constructing $V(\L)$ (which can be accomplished in quadratic time by brute-force examination of all pairs of curves), followed by several rounds of invocation of Theorem~\ref{thm:ams}; the random sampling step and the construction of vertical decomposition in the plane are standard and are not the bottleneck of the algorithm.  
A major technicality arises from the fact that we need to keep track of the points of~$V(\L)$ contained in each cell, to determine which cells are acceptable.

Recall that, as described in the introduction, we are using the real RAM model of computation, with the additional assumption that the roots of a univariate polynomial of degree $b$ can be computed in $O_b(1)$ time. The other common alternative is to assume integer coefficients and express the running time in the number of bit operations, as a function of the size of input and the bit length of the input coefficients; see~\cite[Chapter 8]{BPR-06} for a discussion of such models of computation.  We chose to proceed with the former model.

As a main tool, we use a result of Basu, Pollack, and Roy \cite[Algorithm 16.6]{BPR-06} concerning arrangements
of zero sets of polynomials (see also \cite[Theorem 4.1]{AMS-13} for a similar formulation):

\begin{theorem}[Basu, Pollack, and Roy \cite{BPR-06}]
  \label{thm:connected_components}
  Let $\F = \{f_1, \ldots, f_s \}$ be a set of $s$ real $d$-variate polynomials, each of degree at most $D$.
  Then the arrangement $A(\F)$ in ${\reals}^d$ has $O(1)^d (sD)^d$ connected components,
  and it can be computed in time at most $T = s^{d+1} D^{O(d^4)}$.
  Each connected component is described as a semi-algebraic set using at most $T$ polynomials of degree bounded by $D^{O(d^3)}$.
\end{theorem}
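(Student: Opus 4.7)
The plan is to combine a sign-condition enumeration with a critical point method from real algebraic geometry. First I would reduce the problem to describing, for each realizable sign condition $\sigma\colon\{1,\dots,s\}\to\{-1,0,+1\}$, the connected components of the realization $R_\sigma=\{x\in\reals^d : \operatorname{sign}(f_i(x))=\sigma(i)\text{ for all }i\}$. The total number of realizable sign conditions is $O(1)^d(sD)^d$ by the Oleinik--Petrovsky--Milnor--Thom bounds (or, more sharply, by the Basu--Pollack--Roy refinement counting sign conditions in terms of $s$ and $d$), which already gives the claimed bound on the number of connected components of $A(\F)$.

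For each sign condition, my approach would be to replace the (possibly open, possibly unbounded, possibly singular) set $R_\sigma$ by a smooth compact variety with the same connected components, via an infinitesimal deformation. Concretely, I would introduce nested infinitesimals $\eps_1\gg\eps_2\gg\cdots$, replace a strict inequality $f_i>0$ by $f_i\ge\eps_i$, replace an equality $f_i=0$ by $-\eps_i'\le f_i\le\eps_i'$, and intersect with a ball of infinitesimally large radius to ensure compactness. A standard limiting argument shows that every connected component of $R_\sigma$ is the limit of a connected component of the deformed set, so it suffices to meet every connected component of each deformed variety.

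To meet each connected component I would compute the critical points of a generic linear projection (or of a generic squared-distance function) restricted to the deformed variety. By Morse theory, each connected component of a compact smooth variety contains at least one such critical point. The critical point equations form a zero-dimensional polynomial system of degree $O(D)$ in $d$ variables augmented by Lagrange multipliers; solving it via geometric resolution, Gr\"obner bases, or the multivariate sign determination / pseudo-remainder machinery developed in \cite{BPR-06} takes time $D^{O(d^3)}$ per sign condition and produces at most $D^{O(d^3)}$ points, each described by polynomials of degree $D^{O(d^3)}$. Summing over the $O((sD)^d)$ sign conditions and accounting for the cost of sign evaluation at each sample point (an $O(s)$ factor) yields the claimed running time $s^{d+1}D^{O(d^4)}$, and each connected component inherits a semi-algebraic description from the polynomials used to isolate its witness point together with the sign conditions themselves.

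The main obstacle I expect is the careful bookkeeping with infinitesimals: one has to track how degrees and coefficient sizes blow up under successive specializations $\eps_j\to 0$, and ensure that the limit step faithfully matches deformed components with components of $R_\sigma$ (rather than merging or losing some). This is exactly the delicate but essentially routine analysis carried out in \cite[Chapter 16]{BPR-06}; once that machinery is in place, the critical-point framework above delivers the stated complexity bounds.
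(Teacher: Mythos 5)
First, note that the paper does not prove this statement at all: it is imported verbatim as a black box from Basu, Pollack, and Roy \cite{BPR-06} (Algorithm 16.6), so there is no in-paper argument to compare yours against. Judged on its own terms, your sketch gets the first half essentially right: the bound $O(1)^d(sD)^d$ on the number of cells does follow from the Basu--Pollack--Roy refinement, provided you invoke the version that counts \emph{connected components of realizations} of sign conditions rather than merely the number of realizable sign conditions (a single realization $R_\sigma$ can have many components, so the count of sign conditions alone does not give the bound you claim it does).

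The genuine gap is in the second half. The critical point method with infinitesimal deformations produces a finite set of witness points that \emph{meets} every connected component of every $R_\sigma$; it does not \emph{compute the connected components}. Two things are missing. First, you must decide which witness points lie in the same connected component of a given $R_\sigma$, and connectivity queries in semi-algebraic sets are exactly what the roadmap machinery of \cite[Chapters 15--16]{BPR-06} is for; they are not a byproduct of sample-point computation. Second, and more seriously, the theorem requires a semi-algebraic \emph{description} of each individual connected component: the sign condition $\sigma$ describes all of $R_\sigma$, not one component of it, and the polynomials isolating a witness point describe a point, not the component containing it, so your closing sentence does not produce the required output. Extracting a quantifier-free formula whose realization is exactly one connected component is the actual content of Algorithm 16.6 (via parametrized roadmaps), and it is also where the $D^{O(d^4)}$ in the running time comes from --- your own accounting yields only $s^{d+1}D^{O(d^3)}$, which is a signal that the dominant step is absent from the sketch.
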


Following the notation of Lemma~\ref{lem:firstPartitioning}, at each step $k > 1$, we are given a subset $V_{k-1} \subseteq V(L)$ and a previously computed partitioning polynomial $F_{k-1}$. We first apply Theorem \ref{thm:ams} to compute a partitioning polynomial $f$ of degree $D$ for the point set $V_{k-1}$. Then, to determine if a cell~$\tau$ in~${\reals}^3 \setminus Z(f)$ is acceptable, we test whether $\tau$ intersects at most $C_4(b) n / D^2$ curves from $\L$. 
By applying Theorem~\ref{thm:connected_components} to our polynomial $f$, 
we represent each cell $\tau \in {\reals}^3 \setminus Z(f)$ as a semi-algebraic set (a Boolean formula with polynomial sign tests as atoms),
and then test, for each curve $\gamma \in \L$, 
whether $\gamma$ intersects $\tau$. This can be done in two steps. First, we use Theorem \ref{thm:connected_components} to compute all of the connected components of $\gamma\backslash  Z(f)$ (recall that $\gamma$ is defined by polynomials of degree at most $b$, and this can in fact be replaced by a single polynomial of degree at most $2b$), and we assign each of these components to the corresponding cell $\tau$ (the latter can be done by computing a point on each of the connected components, and checking which cell contains this point). Overall, this can be done in time $D^{O(1)}$ (see also \cite{AMS-13} for similar considerations).
Thus the total running time, over all curves in $\L$, is $n O_D(1) + n D^{O(1)} = O_D(n)$.
Next, we compute the new polynomial $F_k = f \cdot F_{k-1}$, and then form the subset $V_k$, by testing for each point of $V_{k-1}$ whether it lies in an unacceptable cell, using the membership test already discussed.
Applying Theorem~\ref{thm:connected_components} and the fact that
$\abs{V_{k-1}} \le \abs{V(\L)} = O(n^2)$, we can complete this task in time $O_D(n^2)$.
The process repeats $O(\log D)$ times and takes $O_D(n^2)$ time in total.

We next need to construct a decomposition into vertical prisms within each unacceptable cell~$\Omega \in {\reals}^3 \setminus Z(P)$.
This involves the computation of $(1/r)$-cutting within $\Omega$. Recall that we apply the randomized algorithm described
in \cite[Theorem 2.1]{dBS-95}, which constructs a $(1/r)$-cutting in an
arrangement of $m$ arcs in expected time
$O(m\log{r} + A \cdot r/m)$, where $A$ is the total number of intersections among these arcs. Applying this to each of the arrangements $\A(W_\Omega)$, by substituting $m := \abs{W_{\Omega}}$, $A := \abs{\V(W_{\Omega})}$ (where $\abs{\V(W_{\Omega})}$ is the number of vertices in the underlying arrangement), we obtain an expected running time of 
$O\left( \abs{W_{\Omega}}\log{r} +\abs{\V(W_{\Omega})} \cdot \frac{r}{\abs{W_{\Omega}}}\right)$.
Recall that
$\abs{\V(W_{\Omega})} = O(\abs{W_{\Omega}} + n^2/D^4)$.
Then
it is easy to verify that the expected running time, over all unacceptable cells $\Omega$, is proportional to
\begin{align*}
  \sum_\Omega \left(\abs{W_{\Omega}}\log{r} +\abs{\V(W_{\Omega})} \cdot \frac{r}{\abs{W_{\Omega}}} \right) & =
\sum_\Omega \left((\abs{W_{\Omega}}\log{r} + (\abs{W_{\Omega}} + n^2/D^4)\cdot \frac{r}{\abs{W_{\Omega}}} \right) \\
                                                                                                           & = \sum_\Omega \left((\abs{W_{\Omega}}\log{r} + r + \frac{n^2}{D^4}\cdot \frac{r}{\abs{W_{\Omega}}} \right) .
\end{align*}
Next, we recall that $r = \frac{c D^2}{n} \cdot \abs{W_{\Omega}}$ and $\sum_\Omega \abs{W_{\Omega}} =  O(n D\log{D} + n)$ (mentioned in the discussion preceding \eqref{eq:second_stage_cells}), in order to conclude that the above sum is $O_D(n)$, where the constant of proportionality depends polynomially on~$D$.

We conclude the construction by associating with each second-stage cell $\xi = \Omega \cap \sigma$ the set of curves
$\L_\xi \subset \L$ that it intersects, by testing for each curve $\gamma \in \L_\Omega$ whether it also meets the prism~$\sigma$. Omitting any further details, we have shown:

\begin{theorem}
  \label{thm:decomposition_alg}
  The decomposition described in Theorem~\ref{thm:cell_decomposition_restated} can be computed in randomized expected time~$O_D(n^2)$.
\end{theorem}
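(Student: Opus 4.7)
}
The plan is to account for the running time of each stage of the construction that was described abstractly in Sections~\ref{sec:first_step}--\ref{sec:second_step}, and to check that the total cost is dominated by the first stage, which runs in expected $O_D(n^2)$ time. First I would compute the multiset $V(\L)$ by brute force: examine every pair of curves in $\L$, and, by invoking the root-finding primitive on the univariate polynomial capturing the $xy$-projected intersection of a pair, extract the $O(1)$ pairs of vertically visible points contributed by that pair in $O_b(1)$ time. Since $|V(\L)|=O(n^2)$, this preprocessing step runs in $O_b(n^2)$ time, and its output is the weighted point set to which Theorem~\ref{thm:ams} will be applied.

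Next, I would implement the iteration of Lemma~\ref{lem:firstPartitioning}. At each of the $k = O(\log D)$ stages, Theorem~\ref{thm:ams} applied to the current weighted multiset $V_{k-1}$ (of total weight $O(n^2)$) produces a degree-$D$ partitioning polynomial $f$ in expected time $O(|V_{k-1}|D^3 + D^9) = O_D(n^2)$. To decide which cells of $\reals^3\setminus Z(f)$ are acceptable, I would apply Theorem~\ref{thm:connected_components} with the single polynomial $f$ (or, later in the iteration, with the already-constructed generators of $F_{k-1}\cdot f$, whose total degree is $O(D\log D)$), to obtain semi-algebraic descriptions of all connected components, together with sample points inside them, in time $D^{O(1)}$. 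For each curve $\gamma \in \L$, I would then invoke Theorem~\ref{thm:connected_components} again on the system consisting of $f$ together with the $O(1)$ polynomials defining $\gamma$, extracting the connected components of $\gamma\setminus Z(f)$ and locating each in the appropriate cell through its sample point. This costs $D^{O(1)}$ per curve and hence $O_D(n)$ in total. To prepare $V_k$, I would analogously locate every point of $V_{k-1}$ in its containing cell, which costs $O_D(n^2)$ across the whole multiset. Summed over the $O(\log D)$ iterations, the first stage runs in expected $O_D(n^2)$ time.

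For the second stage, inside every unacceptable cell $\Omega$ I would form the arcs $W_\Omega$ (using Theorem~\ref{thm:Harnack} and \cite[Lemma 2.3]{SZ-17} to decompose projections into $x$-monotone Jordan arcs) and apply the De~Berg--Schwarzkopf algorithm \cite[Theorem 2.1]{dBS-95} to build a $(1/r)$-cutting. With $r = (cD^2/n)|W_\Omega|$ and vertex count $|\V(W_\Omega)| = O(|W_\Omega| + n^2/D^4)$, the expected cost inside a single $\Omega$ is $O(|W_\Omega|\log r + r + (n^2/D^4)\cdot r/|W_\Omega|)$. Summing over all unacceptable cells and using $\sum_\Omega |W_\Omega| = O(nD\log D)$ together with the definition of $r$, the total second-stage cost is $O_D(n)$. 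Lifting each pseudo-trapezoid to a prism and intersecting with $\Omega$ is free, and assigning curves to the resulting second-stage cells $\xi = \Omega\cap\sigma$ is done by the same Theorem~\ref{thm:connected_components}-based membership test used in the first stage, again in $O_D(n)$ time overall.

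The only step whose cost is not obviously small is the second bullet of the first stage, namely the bookkeeping of $V(\L)$: this set has size $\Theta(n^2)$, is touched in every one of the $O(\log D)$ rounds, and therefore drives the $O_D(n^2)$ bound. This is the genuine bottleneck, and any improvement would require avoiding an explicit enumeration of $V(\L)$---something we do not attempt here. The remaining task in the write-up is to verify that each call to Theorems~\ref{thm:ams} and~\ref{thm:connected_components} respects the multiset/weighted formulation of Theorem~\ref{thm:ams} (since some points of $V(\L)$ carry large multiplicities), and to observe that all dependencies on $D$ collapse into the $O_D(\cdot)$ notation. Assembling these pieces yields the claimed $O_D(n^2)$ expected running time.
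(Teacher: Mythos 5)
Your proposal matches the paper's own argument essentially step for step: brute-force construction of the weighted multiset $V(\L)$ in $O(n^2)$ time, $O(\log D)$ rounds each combining Theorem~\ref{thm:ams} with the cell-membership tests from Theorem~\ref{thm:connected_components} (with the $O_D(n^2)$ cost of relocating the points of $V_{k-1}$ as the bottleneck), and the De~Berg--Schwarzkopf cutting inside unacceptable cells with the same $O(|W_\Omega|\log r + r + (n^2/D^4)\cdot r/|W_\Omega|)$ accounting summing to $O_D(n)$. The approach and the identification of the quadratic bottleneck are both the same as in the paper, so the proposal is correct.
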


\begin{remark}
A major open problem is to improve the running time to subquadratic. The bottleneck is having to explicitly process $V(\L)$, or, more generally $V_k$, at each iteration $k$. In the worst case, this set could contain $\Theta(n^2)$ points. The remaining steps of the algorithm can be completed in~$O_D(n)$ time. Thus
the key to obtaining subquadratic running time lies in having an efficient implicit representation for $V(\L)$.
We next present such an efficient implementation, based on a range-search mechanism, for the case where $\L$ is a set of lines in $3$-space. 
\end{remark}

\subsection{A Faster Construction for the Case of Lines}
\label{sec:lines_alg}
In this section, we present an improved implementation of our algorithm for the case of lines in $\reals^3$ in general position, or, more generally, line segments in $3$-space.
To begin, we will present an algorithm that works for lines (or line segments) in non-vertical general position. We will then show how this can be extended to lines in general position.
Adapting our general position assumptions from the beginning of this section, this implies that no pair of projected lines (or line segments) coincide.

Our approach is to use a compact representation for the points of vertical visibility instead of storing them explicitly.
We note that, in contrast with the algorithm of Theorem~\ref{thm:cell_decomposition}, we 
are able to track only those pairs of vertically visible points that lie in the same cell of the current decomposition, we refer to them as \emph{unsplit visibility pairs}: pairs in which the two points end up in different cells are not tracked at all, we refer to them as \emph{split visibility pairs}.
We also comment that since the cells $\tau \in {\reals}^3 \setminus Z(f)$ may not be $xy$-monotone, we may also track unsplit visibility pairs where the two endpoints are not visible to each other inside $\tau$ (in case the two corresponding lines intersect in the $xy$-projection of $\tau$).
This, however, does not violate the analysis of Sections~\ref{sec:first_step} and~\ref{sec:second_step}, and the assertions in Theorem~\ref{thm:cell_decomposition} for the case of lines continue to hold.

We exploit the mechanism of Agarwal \cite{Agarwal-90} to efficiently represent (and count) intersections among line segments
in the plane.\footnote{The mechanism in~\cite{Agarwal-90} counts points with multiplicity in case there are three or more concurrent lines (or line segments). This, however, is not an issue in our analysis since the algorithm in~\cite{AMS-13} can handle points with multiplicity---see below.}
We revisit the algorithm of Agarwal, Matou{\v{s}}ek, and Sharir \cite{AMS-13} summarized in Theorem~\ref{thm:ams} and modify the procedures that were originally designed to manipulate the input points explicitly, to instead perform the required operations implicitly.
A closer inspection of the analysis in \cite{AMS-13} shows that we need to support the following two operations:
(i)~select uniformly a point at random among a collection of points contained in a specific cell
$\tau \in {\reals}^3 \setminus Z(f)$,
for an appropriate polynomial $f$, and
(ii)~count the number of points contained in~$\tau$.

Fix such a cell~$\tau$. As in Section~\ref{sec:second_step}, let $\L_{\tau}$ be the subset of lines meeting $\tau$. We take the intersection of each line in $\L_{\tau}$ with $\tau$ (this takes constant time in our model of computation),
obtain a collection~$S_\tau$ of open line segments contained in~$\tau$, and consider the set of their projections $S^{*}_{\tau}$ to the $xy$-plane.  Put $s_\tau \coloneqq |S^{*}_{\tau}|$. Using the algorithm in \cite{Agarwal-90} we construct a compact representation
for the pairwise intersecting segments in $S^{*}_{\tau}$ in overall $O(s_{\tau}^{4/3} \polylog{s_\tau})$ time. In particular, this
implies that operation (ii) can be completed in the same time bound. Concerning operation (i), the resulting compact representation
consists of a union of complete bipartite intersection graphs (each such graph is stored as a pair~$(A,B)$ of sets of segments, in which every segment of $A$ intersects every segment of $B$; the pair is stored using $\Theta(|A|+|B|)$ space rather than $\Theta(|A|\times |B|)$; hence the space savings).
Once such an implicit representation is available, it is possible to randomly sample a point of intersection in logarithmic time, by first picking the bipartite graph and then randomly and uniformly picking a segment from $A$ and a segment of $B$---see \cite{Agarwal-90} for more details concerning this construction.

The algorithm in~\cite{AMS-13} constructs a polynomial partitioning in several iterative steps. The majority of the algorithm's running time is spent on a procedure that computes a polynomial $f$ that simultaneously dissects\footnote{A polynomial $f$ \emph{dissects} a set $A$ if $f > 0$ on at most $(7/8)|A|$ points from $A$ and $f < 0$ on at most $(7/8)|A|$ points.} a collection $A_1, \ldots, A_k$ of sets of points, and also computes the sign of $f$ on each point of each set $A_i$. The polynomial $f$ is formed by lifting the points in $A_1, \ldots, A_k$ into a higher-dimensional Euclidean space using the \emph{Veronese mapping}, taking a small random sample of the points, and then computing the hyperplane passing through these randomly sampled points in the lifted space. The analysis in~\cite{AMS-13} shows that the composition of this hyperplane with the Veronese mapping is a polynomial that with high probability (with respect to the randomly chosen subsets of $A_1,\ldots,A_k$) dissects at least half of the sets $A_1, \ldots, A_k$. Recall that in the setting of our problem the points in $A_1, \ldots, A_k$ are not given explicitly. Instead we can use our efficient implementation of steps (i)--(ii) in order to construct $f$ as above, as well as counting how many points of vertical visibility among $\L$ (or, more generally, a collection of sets of line segments in $\reals^3$) are contained in the regions $\{f>0\}$ and $\{f<0\}$.
This is done over iterations as follows.
At iteration $j$ of the computation of the partitioning polynomial, we have $k \coloneqq 2^{j}$ sets of points of vertical visibility, each of which is represented as the disjoint union of complete bipartite graphs of line-segments (we begin with a trivial representation corresponding to the lines in $\L$).  We randomly sample points in the lifted space from each set  $A_1, \ldots, A_k$, compute the corresponding polynomial $f_j$, and then compute the sign of each point from each $A_i$, $i=1, \ldots, k$, w.r.t. $f_j$. In order to do so, we need to cut the line-segments participating in the representation of each $A_1, \ldots, A_k$  with $Z(f_j)$, and obtain a new collection of sets of line segments---these sets represent the points of vertical visibility at the next iteration $j+1$, that is, the new sets  $A'_1, \ldots, A'_{2k}$. We next compute a compact representation for each of these new sets in order to (i) count how many points of vertical visibility lie in the regions $\{f_j>0\}$ and $\{f_j<0\}$, and (ii) randomly sample points in $A'_1, \ldots, A'_{2k}$ in order to compute $f_{j+1}$.
While there are a few additional technical details, these issues do not impact the running time of the algorithm. Indeed, the overall expected running time is dominated by the total complexity of the compact representation for the pairwise intersecting segments, and is thus $O_D(n^{4/3}\polylog{n})$.

We once again emphasize that with this implementation we can only guarantee to control the number of unsplit visibility pairs
inside a cell (that is, both defining lines meet that cell and the two vertically visible points are contained in the cell).
The number of split visibility points within a cell
can be arbitrarily large. To summarize we have shown:

\begin{lemma}
  \label{lem:lines_alg}
  Let $\L$ be a collection of $n$ line segments in ${\reals}^3$ in non-vertical general position,
  and let $V(\L)$ be the set of points of their vertical visibilities.
  Let $D$ be a positive integer.  Then one can compute in expected $O_D(n^{4/3}\polylog{n})$ time
  a partitioning polynomial $f$ of degree $D$, such that each connected component of ${\reals}^3 \setminus Z(f)$
  contains $O(|V(\L)|/D^3)$ pairs of unsplit points of vertical visibility from~$V(\L)$.
\end{lemma}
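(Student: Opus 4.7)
The plan is to adapt the polynomial partitioning algorithm of Agarwal, Matou{\v{s}}ek, and Sharir (Theorem~\ref{thm:ams}) so that it operates on the multiset $V(\L)$ of vertical visibility points without ever enumerating $V(\L)$ explicitly. Since $|V(\L)|$ can be as large as $\Theta(n^2)$, any approach that materializes this set runs into the quadratic bottleneck identified in the remark following Theorem~\ref{thm:decomposition_alg}, so the entire strategy rests on an implicit representation.

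First I would exploit the fact that a point of vertical visibility between two non-vertical lines corresponds to a transverse intersection of their $xy$-projections. Applying Agarwal's 1990 algorithm to the projected line segments, I would compute in $O(n^{4/3}\polylog n)$ time a compact representation of all such intersections as a disjoint union of complete bipartite subgraphs $(A_i,B_i)$, where each $A_i$-segment crosses each $B_i$-segment and the total size $\sum_i(|A_i|+|B_i|)$ is $O(n^{4/3}\polylog n)$. This representation immediately supports two primitives in polylogarithmic time: (i)~count the total number of represented intersection points, and (ii)~sample one such intersection uniformly at random, by first drawing a bipartite piece weighted by $|A_i|\cdot|B_i|$ and then drawing a segment from each side. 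These are precisely the two primitives the AMS algorithm invokes on its input point set.

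Next I would unfold the iterative structure of the AMS construction. At iteration $j$ of its $O(\log D)$ rounds, the algorithm maintains $k=2^j$ point multisets $A_1,\dots,A_k$, one per current cell; it draws a small random sample from each via the Veronese lift, computes a hyperplane through the lifted samples, and pulls it back to a trivariate polynomial $f_j$ that with high probability dissects at least half of the $A_i$. In our setting each $A_i$ will be stored as a collection of bipartite intersection structures on the projected segments restricted to the corresponding cell, and operations (i)--(ii) are exactly what is needed to sample and to track progress. After $f_j$ is produced, I would cut each stored segment against $Z(f_j)$ (constant time per segment in our model, since $\deg f_j \leq D$), redistribute the resulting subsegments to the $2k$ new cells, and rebuild the compact bipartite representation within each cell by a fresh invocation of Agarwal's algorithm; the total cost of this update across all cells at iteration $j$ is again $O_D(n^{4/3}\polylog n)$ because the cutting increases the segment count by only an $O_D(1)$ factor per round. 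Summing over $O(\log D)$ rounds absorbs into the $O_D$ notation and yields the claimed running time, while the AMS guarantee applied in $\reals^3$ delivers $O(|V(\L)|/D^3)$ points per connected component of $\reals^3\setminus Z(f)$.

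The main conceptual obstacle, and the reason the conclusion is phrased in terms of \emph{unsplit} pairs, is that the compact representation inside a cell $\tau$ knows only about intersections of projected segments that are both contained in $\tau$; a visibility pair whose two lifted points land in different cells is dropped from every $A_i$ the moment its two segments are separated by some $Z(f_j)$. I would therefore have to verify that this is benign for the sampling and counting primitives driving the AMS dissection argument --- intuitively, dropped pairs only make the remaining sets smaller, so the halving estimate of AMS still applies to the surviving multisets --- and then observe that the final guarantee we obtain is automatically the unsplit version, which is exactly what the lemma asserts and what the second-stage cutting of Section~\ref{sec:second_step} consumes.
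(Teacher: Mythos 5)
Your proposal is correct and follows essentially the same route as the paper: a compact bipartite representation of the projected segment intersections via Agarwal's algorithm, supporting the counting and uniform-sampling primitives needed by the iterative Agarwal--Matou\v{s}ek--Sharir construction, with segments re-cut against each $Z(f_j)$ and the representation rebuilt per cell, and with the guarantee weakened to unsplit visibility pairs exactly as the lemma states. The only nitpick is your ``$O_D(1)$ factor per round'' accounting for segment growth --- the cleaner argument is that the product of all the $f_j$ has degree $O(D)$, so each segment is cut into $O(D)$ pieces in total across all rounds --- but this does not affect the bound.
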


We next describe the modifications to Lemma~\ref{lem:firstPartitioning} and Corollary~\ref{cor:firstPartitioning} needed to apply our algorithm.
Beginning with the first decomposition step, we observe that Lemma~\ref{lem:firstPartitioning} continues to hold if instead
of considering the entire set $V_k$, we consider only the subset of unsplit visibility pairs with respect to unacceptable cells, 
that is, those points of vertical visibility, for which both defining lines intersect the same unacceptable cell generated at step $k$.
With this refinement of $V_k$, we modify property~(C.2) in the statement of Lemma~\ref{lem:firstPartitioning} accordingly,
and inside $\Omega$ consider only the unsplit visibility pairs of $V_k$.
Then in the assertion of Corollary~\ref{cor:firstPartitioning} concerning $\Omega$ we can guarantee that either $\Omega$ intersects
$O(n/D^2)$ curves from $\L$, or $\Omega$ contains $O(n^2/D^4)$ pairs of (unsplit) points of vertical visibility from $V(\L)$, or both.

The implementation of the procedure to compute the partitioning polynomial $P$ (using the notation of
Corollary~\ref{cor:firstPartitioning}) is performed by repeatedly invoking Lemma~\ref{lem:lines_alg},
initially on the input lines in $\L$, and at step $k > 1$, on the set of the line segments obtained by intersecting the lines of~$\L$
with the unacceptable cells from step $k-1$ (this replaces the explicit representation of $V_{k-1}$).
At each step the number of line segments is only linear in $n$ and in $D$ (more specifically, every line is cut into at most $D+1$ segments),
and therefore the total running time of computing the partitioning polynomial for $V_k$, over all $O(\log D)$ iterations,
is $O_D(n^{4/3}\polylog{n})$.
In addition, we need to apply some of the operations already discussed above, including the classification of the cells as being
either acceptable or unacceptable;
this takes $O_D(n^{4/3}\polylog{n})$ time in total.

The execution of the second decomposition step proceeds verbatim as above, since we consider only the unsplit pairs of vertical
visibility in a cell $\Omega$.

Finally, we will remove the the assumption that the lines are non-vertical, by computing a partitioning polynomial $Q$ for the $xy$-projections
of the vertical lines using Theorem~\ref{thm:ams}, which takes $O_D(n)$ time in this case.
Then we take the product of $Q$ and $P$ and continue with the execution of the second decomposition step as just described.  
We thus conclude:

\begin{theorem}
  \label{thm:improved_alg_lines}
  The decomposition described in Theorem~\ref{thm:cell_decomposition_restated} for the case of $n$ lines or line segments in $\reals^3$ in general position can be computed in randomized expected time $O_D(n^{4/3}\polylog{n})$.
\end{theorem}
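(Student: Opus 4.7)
The plan is to assemble the pieces already developed in Sections~\ref{sec:first_step}--\ref{sec:lines_alg}, and verify that every step of the two-stage construction of Theorem~\ref{thm:cell_decomposition_restated} can be executed within the claimed time bound. I would first treat the case of lines or line segments in non-vertical general position; the extension to the general-position case is a brief post-processing step.

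For the first decomposition step, I would iterate Lemma~\ref{lem:lines_alg} for $k = 4\lceil \log_2 D\rceil$ rounds exactly as in Lemma~\ref{lem:firstPartitioning} and Corollary~\ref{cor:firstPartitioning}, but crucially never materialize $V(\L)$ or $V_{k-1}$ explicitly. Instead, at each round the relevant multiset of vertically visible pairs is represented by Agarwal's compact union-of-complete-bipartite-graphs structure from~\cite{Agarwal-90}, maintained separately within each currently unacceptable cell. Lemma~\ref{lem:lines_alg} already guarantees that the partitioning polynomial for a round can be computed in expected $O_D(n^{4/3}\polylog n)$ time using the two primitives (uniform sampling of a visibility pair, and counting pairs on each side of the candidate polynomial) that the compact representation supports. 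After the round, I cut the at most $O(nD)$ current line segments by the zero set of the new degree-$D$ factor, classify each newly formed cell as acceptable or unacceptable via Theorem~\ref{thm:connected_components} together with per-segment point-location, and then rebuild the compact representation \emph{only inside each unacceptable cell} so as to track the unsplit visibility pairs required by the refined version of properties (C.1)--(C.2). Summed over $O(\log D)$ rounds this stays within $O_D(n^{4/3}\polylog n)$.

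For the second decomposition step I would invoke the $(1/r)$-cutting algorithm of~\cite[Theorem~2.1]{dBS-95} inside each unacceptable cell $\Omega$, on the $x$-monotone arcs of $W_\Omega$ obtained from the $xy$-projections of $S_\Omega$; the analysis already carried out in Section~\ref{sec:alg_aspects} shows that the total expected cost summed over unacceptable cells is $O_D(n)$, which is dominated by the first step. Lifting each pseudo-trapezoid to a vertical prism and associating to each second-stage cell $\xi = \Omega \cap \sigma$ the subset $\L_\xi$ by testing segment-prism incidence is likewise $O_D(n)$. Finally, to handle possible vertical lines in $\L$, I would split $\L = \L_1 \cup \L_2$ as in Corollary~\ref{cor:firstPartitioningWithVert}, apply Theorem~\ref{thm:ams} to the $xy$-projections of the vertical lines in $\L_1$ to obtain a bivariate polynomial $Q$ of degree $O(D)$ in $O_D(n)$ time, lift $Q$ trivially in $z$, and multiply by the polynomial produced above for $\L_2$ before running the second step on the combined zero set.

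The main obstacle I anticipate is making the iteration of Lemma~\ref{lem:lines_alg} self-consistent: the compact representation of visibility pairs must be recomputed inside each new unacceptable cell at each round, and one must check that (i)~the number of line segments to be cut and re-represented is $O(nD)$ per round, so rebuilding the representation inside a cell containing $m_\Omega$ segments costs $O(m_\Omega^{4/3}\polylog m_\Omega)$ and these costs add up, by concavity and $\sum_\Omega m_\Omega = O(nD\log D)$, to $O_D(n^{4/3}\polylog n)$, and (ii)~dropping the split visibility pairs between rounds does not invalidate the guarantees of Lemma~\ref{lem:firstPartitioning} and Corollary~\ref{cor:firstPartitioning}, since both are only used to upper bound the number of unsplit pairs inside a cell, which is exactly what the second-step $\eps$-cutting analysis (Claim~\ref{clm:subcells}) requires. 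Once these two points are verified, the claimed bound $O_D(n^{4/3}\polylog n)$ follows by summing the per-round costs and adding the $O_D(n)$ cost of the second step.
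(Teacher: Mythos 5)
Your proposal is correct and follows essentially the same route as the paper: iterating Lemma~\ref{lem:lines_alg} with Agarwal's compact bipartite representation of visibility pairs supporting the sampling and counting primitives, tracking only unsplit pairs inside unacceptable cells, leaving the second ($\eps$-cutting) stage unchanged, and handling vertical lines via a product with a planar partitioning polynomial. The only quibble is terminological: the bound $\sum_\Omega m_\Omega^{4/3} = O\bigl((\sum_\Omega m_\Omega)^{4/3}\bigr)$ follows from superadditivity of $x\mapsto x^{4/3}$, not concavity, but the conclusion is right.
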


\section{An Application: Eliminating Depth Cycles among Lines}
\label{sec:applications}

In \cite{AS-18}, Aronov and Sharir obtained a combinatorial bound on the number of cuts needed to eliminate cycles in a collection of
pairwise-disjoint non-vertical lines in ${\reals}^3$.

The main obstruction to converting Aronov and Sharir's combinatorial bound into an algorithmic procedure was the absence of a constructive version of Theorem~\ref{thm:guth}.  More specifically, their proof proceeds by partitioning $\reals^3$ using a polynomial~$f$ of degree~$D$ (see more below on the choice of $D$), and then cutting each line not contained in $Z(f)$ at the points where it crosses $Z(f)$ (lines contained in $Z(f)$ need slightly different treatment; we omit the details here; this does not affect the asymptotics of the algorithm runtime or of the number of cuts required). This procedure produces at most $D$ cuts per line; these are cuts of the \emph{first type}. Every line $\ell$ is also cut at $O(D^2)$ additional points, which correspond to locations where $\ell$ passes above a critical point of $f$; more precisely,
this is a point $(x_0,y_0,z_0) \in \ell$ such that $f$ and $\partial f/\partial z$ are simultaneously zero at $(x_0,y_0,z_1)$ for some $z_1 < z_0$. This results in a total of $O(D^2n)$ \emph{second-type} cuts.

Now, for each connected component $\tau$ of $\reals^3 \setminus Z(f)$, Aronov and Sharir \cite{AS-18} collect the set $\L_\tau$ of the lines of $\L$ meeting $\tau$, and each set $\L_\tau$ is handled recursively, producing a recurrence of the form
\[
  C(n)=O(D^3) \cdot C(cn/D^2) + O(D^2 n)
\]
for the number of cuts $C(n)$ sufficient to eliminate all cycles in a set of $n$ lines, for a suitable absolute constant $c$.  The bound of $O(n^{3/2} \log^{O(1)}n)$ is obtained by setting $D$ to $\Theta(n^{1/4})$.

We next sketch how to efficiently implement the steps outlined above.  We construct the partitioning in time $O_D(n^{4/3}\polylog{n})$, using Theorem~\ref{thm:improved_alg_lines}, where we are now forced to choose $D$ a constant; our polynomial $P$ has degree at most $D\log D$, which increases the number of cuts to $O((D\log D)^2n)$ and the number of cells to $O((D\log D)^3)$.  Determining the first-type cuts of each line can be done in time $O_D(1)$ as described in Section~\ref{sec:alg_aspects}. 
Finding the cuts of the second type along a line can be done by constructing the solution set of the system $\{P = 0,\partial P/ \partial z =0\}$ in the vertical halfplane bounded by the line, in time $O_D(1)$; this follows from our assumption about the model of computation.
Additional work required to process the secondary subdivision involves simply cutting each line meeting a primary subdivision cell $\Omega$ (recall that the sets $\L_\Omega$ are constructed by the algorithm of Theorem~\ref{thm:improved_alg_lines}) at the points where it crosses the boundary of each prism $\sigma$, or equivalently finding the points where the projection of such a line enters and exits each trapezoid of the vertical decomposition $\VD(R^{*}_{\Omega})$; this computation is already performed in Theorem~\ref{thm:improved_alg_lines}.
We thus obtain $O(D^3\log^3{D})$ additional cuts for each line.

A close examination of \cite{AS-18} shows that, even though our partition is not a strictly polynomial one, due to the presence of the secondary subdivision, the correctness argument of Aronov and Sharir \cite{AS-18} carries through here as well.
Indeed, after the application of the first- and second-type cuts, as well as the cuts with the boundary of each prism $\sigma$, we are left to process the remaining cycles in each second-stage cell in recursion. In this case we collect all lines meeting a cell to form a new subproblem. 
To summarize, the number of cuts made by our algorithm is described by the recurrence
\[C(n)=O(D^3\log^3 D) \cdot C(cn/D^2) + O_D(n),\]
where $c$ is an absolute constant and $D$ is a constant of our choice.
The expected running time on the other hand is governed by the recurrence
\[
  T(n) =  O(D^3\log^3D) \cdot T(cn/D^2) + O_D(n^{4/3}\polylog{n}).
\]
They both solve to $O_D(n^{3/2+\eps(D)})$, once we pick a sufficiently large constant $D>0$; $\eps=\eps(D)>0$ depends on~$D$ and can be made arbitrarily small by increasing~$D$, so we can rewrite the bound as $O_{\eps}(n^{3/2 + \eps})$.
Note that, since $D$ cannot be set to grow with $n$, the number of cuts guaranteed by our algorithm is slightly larger than that guaranteed by the upper bound of \cite{AS-18}, namely $O(n^{3/2}\log^{O(1)}n)$.

\begin{theorem}
  \label{thm:line-cycles}
  Let $\L$ be a collection of $n$ pairwise-disjoint non-vertical lines in ${\reals}^3$ so that 
  no pair of lines have coinciding
  $xy$-projections. 
  Then one can apply $O_{\eps}(n^{3/2+\eps})$ cuts eliminating all depth cycles among the lines in $\L$.
  These cuts can be computed in expected time $O_{\eps}(n^{3/2+\eps})$, for any $\eps > 0$.
  The number of cuts is near optimal in the worst case.
\end{theorem}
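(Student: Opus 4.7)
The plan is to follow the framework of Aronov and Sharir~\cite{AS-18}, but substitute the purely polynomial partition used there with the hybrid polynomial/prismatic decomposition provided by Theorem~\ref{thm:improved_alg_lines}, and then recurse. Concretely, I would begin by choosing a constant parameter $D$ (to be fixed at the end so that the final exponent is $3/2+\eps$) and invoking Theorem~\ref{thm:improved_alg_lines} on $\L$. This produces, in expected time $O_D(n^{4/3}\polylog n)$, a trivariate polynomial $P$ of degree $O(D\log D)$ together with a secondary decomposition into $O((D\log D)^3)$ second-stage cells, each intersecting only $O(n/D^2)$ lines of~$\L$.

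Next I would perform the cuts. The \emph{first-type} cuts are the intersections of every line $\ell$ (not contained in $Z(P)$) with the boundary $Z$ of the decomposition: the $O(D\log D)$ crossings with $Z(P)$, and the $O_D(1)$ crossings with the vertical walls of the prisms containing it (the latter are already computed as a byproduct of Theorem~\ref{thm:improved_alg_lines}, as they are exactly the entry/exit points of $\ell^*$ into the trapezoids of $\VD(R^{*}_{\Omega})$). The \emph{second-type} cuts are applied on each line at the points lying vertically above the critical points of $P$, i.e., points $(x_0,y_0,z_0)\in\ell$ for which $P$ and $\partial P/\partial z$ simultaneously vanish at $(x_0,y_0,z_1)$ for some $z_1<z_0$; by our model of computation these can be located in $O_D(1)$ time per line by intersecting the system $\{P=0,\partial P/\partial z=0\}$ with the vertical halfplane below $\ell$. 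Lines fully contained in $Z(P)$ are treated separately, exactly as in~\cite{AS-18}, with negligible effect on the asymptotics.

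After these cuts are applied, for each second-stage cell $\xi$ I collect the subset $\L_\xi\subseteq\L$ of lines meeting $\xi$ (of size $O(n/D^2)$) and recurse on $\L_\xi$. The key correctness claim---which I view as the main conceptual obstacle---is that despite our partition not being a pure polynomial partition, any depth cycle among the uncut pieces must be entirely contained in a single second-stage cell, so that it suffices to handle it in the recursion. This should follow because any two pieces of two distinct lines that witness a depth relation across the boundary of a cell would produce a pair of vertically visible points on opposite sides of that boundary, and by our first- and second-type cuts such a pair is guaranteed to have been severed (the first-type cuts address boundary crossings and the second-type cuts address vertical visibilities straddling critical points of~$P$). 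Verifying this in the presence of the vertical prism walls of the secondary subdivision is the step that requires care, but the geometry of vertical walls is benign: a vertical line in $\reals^3$ projects to a point, so vertical visibilities interact with vertical walls in a transparent way and the Aronov--Sharir argument transfers essentially verbatim.

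Finally, let $C(n)$ and $T(n)$ denote the number of cuts and the expected running time on $n$ lines. The recursion gives
\[
C(n)=O((D\log D)^3)\cdot C(cn/D^2)+O_D(n),\qquad T(n)=O((D\log D)^3)\cdot T(cn/D^2)+O_D(n^{4/3}\polylog n),
\]
for a suitable absolute constant $c$. Both recurrences solve to $O_D(n^{3/2+\eps(D)})$ where $\eps(D)\to 0$ as $D\to\infty$, so picking $D$ large enough in terms of $\eps$ yields the desired $O_\eps(n^{3/2+\eps})$ bound on both the number of cuts and the running time. The near-optimality statement is inherited from the lower bound $\Omega(n^{3/2})$ recalled in Section~\ref{sec:intro}.
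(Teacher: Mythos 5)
Your proposal follows the same route as the paper: invoke Theorem~\ref{thm:improved_alg_lines} with a constant $D$, apply the first-type cuts at crossings with $Z(P)$, the second-type cuts above the critical points of $\{P=0,\,\partial P/\partial z=0\}$, and the additional cuts at the prism walls of the secondary subdivision, then recurse in each second-stage cell; your recurrences for $C(n)$ and $T(n)$ and their resolution match the paper's. The argument is correct and essentially identical to the one given in Section~\ref{sec:applications}.
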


\begin{remarks}
  (a) Previous algorithms that solve this problem apply an approximation algorithm of Aronov, De~Berg, Gray, and Mumford \cite{AdBGM-08}, which involves matrix multiplication. The running time is close to $O(n^{4 + 2\omega})$, where $\omega < 2.373$ is the exponent
  of matrix multiplication; this was later improved by De Berg \cite{deBerg-17} to $O(n^{3 + \omega})$.
  In spite of the fact our bound $O(n^{3/2+\eps})$ on the number of cuts is slightly inferior to the bound
  $O(n^{3/2}\log^{O(1)}n)$ in \cite{AS-18} as well as the bound resulting from \cite{AdBGM-08}, our algorithm is considerably more efficient.

  (b) Note that the algorithm described above works equally well with non-vertical pairwise-disjoint algebraic curves of constant degree, with only superficial modifications, mirroring the combinatorial analysis of Aronov and Sharir \cite{AS-18} as well as of Sharir and Zahl \cite{SZ-17}.  The current analysis, however, can only guarantee quadratic running time (see Theorem~\ref{thm:decomposition_alg}).
\end{remarks}

\paragraph*{Acknowledgments:}
    The authors wish to thank Micha Sharir for several useful discussions, and would especially like to recognize Saugata Basu for his patience and generosity answering the authors' questions on matters of algebra, and in referring them to the specific algorithms in \cite{BPR-06} cited in this paper. The authors would also like to thank the three anonymous referees for their numerous comments and suggestions.

\end{document}